\ttfamily\fontsize{7.5}{9}\selectfont,
\def\prog{\ensuremath{\mathsf{prog}}\xspace}
\def\translate{\ensuremath{\mathsf{translate}}\xspace}
\def\subprog{\ensuremath{\mathsf{subprog}}\xspace}
\def\RR{{\mathbb{R}}}
\def\RRules{\ensuremath{\mathsf{RR}}\xspace}
\newcommand{\rankpbe}{\ensuremath{\mathsf{rank}_{\mathrm{PBE}}}\xspace}
\newcommand{\rankc}{\ensuremath{\mathsf{rank}_{\mathrm{c}}}\xspace}
\newcommand{\sample}{\ensuremath{\mathsf{sample}}\xspace}
\newcommand{\eatme}[1]{}
\newcommand{\ie}{\emph{i.e.}\xspace}
\newcommand{\flashfill}{{\fontsize{9}{10}\selectfont\textsf{FlashFill}}\xspace}
\newcommand{\flashprofile}{{\fontsize{9}{10}\selectfont\textsf{FlashProfile}}\xspace}
\newcommand{\flashmeta}{{\small\textsf{FlashMeta}}\xspace}
\newcommand{\secref}[1]{Section~\ref{section:#1}\xspace}
\newcommand{\subsecref}[1]{Section~\ref{subsection:#1}\xspace}
\newcommand{\figref}[1]{Figure~\ref{figure:#1}\xspace}
\newcommand{\algoref}[1]{Algorithm~\ref{algorithm:#1}\xspace}
\newcommand{\lang}{\ensuremath{\mathsf{PL}}\xspace}
\newcommand{\dsllang}{\ensuremath{{L}}\xspace}
\newcommand{\esmall}{\ensuremath{E}\xspace}
\newcommand{\specification}[1]{\ensuremath{\phi_{#1}}\xspace}
\newcommand{\qsynth}{\textsc{qpbe}\xspace}
\newcommand{\gtype}[1]{\texttt{#1}\xspace}
\newcommand{\gsymbol}[1]{\emph{#1}\xspace}
\newcommand{\expandsto}{\texttt{:=}\xspace}
\newcommand{\goperator}[1]{\textsc{#1}\xspace}
\newcommand{\boldpara}[1]{\vspace*{0.2ex}\noindent\textbf{#1}\xspace}
\newcommand{\itpara}[1]{\vspace*{0.2ex}\noindent\emph{#1}\xspace}
\newcommand{\rankingscheme}{\ensuremath{\mathsf{rank}}\xspace}
\newcommand{\fullinputset}{\ensuremath{\mathsf{pre}}\xspace}
\newcommand{\qpbe}{{qPBE}\xspace}
\newcommand{\ignore}[1]{}
\newcommand{\phaseone}{optimization phase\xspace}
\newcommand{\phasetwo}{rewrite phase\xspace}
\newcommand{\phasesoneandtwo}{optimization and rewrite phases\xspace}
\pgfplotsset{compat=1.12}
\tikzstyle{block} = [rectangle, draw, fill=blue!20, text centered, rounded corners]
\tikzstyle{line} = [draw, -latex']
\tikzstyle{cloud} = [draw, ellipse,fill=red!20, node distance=3cm,
\tikzset{
  cascaded/.style = {
    general shadow = {
      shadow scale = 1,
      shadow xshift = -1ex,
      shadow yshift = 1ex,
      draw,
      thick,
      fill = white},
    general shadow = {
      shadow scale = 1,
      shadow xshift = -.5ex,
      shadow yshift = .5ex,
      draw,
      thick,
      fill = white},
    fill = white,
    draw,
    thick,
    minimum width = 3cm,
    minimum height = 2cm
  }
}
\tikzset{
  programmer figure/.style = {
    shape = programmershape,
    draw = black,
    very thick,
    fill = green!80,
    minimum size = 2cm
  },
  basic box/.style = {
    font = \footnotesize,
    shape = rectangle,
    align = center,
    draw  = black,
    fill  = #1!33,
    inner sep = 2mm,
    very thick,
    rounded corners},
  input box/.style = {
    font =\footnotesize,
    shape = rectangle,
    align = center,
    draw  = black,
    fill  = black!10,
    inner sep = 3mm,
    very thick},
  decision/.style = {
    font = \footnotesize,
    shape = diamond,
    align = center,
    draw  = black,
    fill  = #1!33,
    text badly centered,
    inner sep=2mm,
    very thick
  },
  header node/.style = {
    Minimum Width = header nodes,
    font          = \strut\Large\ttfamily,
    text depth    = +0pt,
    fill          = white,
    draw},
  header/.style = {
    inner ysep = +1.5em,
    append after command = {
      \pgfextra{\let\TikZlastnode\tikzlastnode}
      node [header node] (header-\TikZlastnode) at (\TikZlastnode.north) {#1}
      node [span = (\TikZlastnode) (header-\TikZlastnode)]
        at (fit bounding box) (h-\TikZlastnode) {}
    }
  },
  hv/.style = {to path = {-|(\tikztotarget)\tikztonodes}},
  vh/.style = {to path = {|-(\tikztotarget)\tikztonodes}},
  fat blue line/.style = {ultra thick, blue},
  data line/.style = {ultra thick, black, densely dashed},
  black line/.style = {ultra thick, black},
  thick flow line/.style = {black,line width=1.33mm,->, >=stealth},
  comm line/.style={very thick,black, ->, >=stealth},
  bidir comm line/.style={very thick,black, <->, >=stealth}
}
\tikzset{>=stealth', state/.append style={minimum size=.4cm, align=center}}
\definecolor{dark-green}{rgb}{0,.6,0}
\tikzstyle{transition}=[->, >=stealth',draw,thick]
\tikzstyle{execution state}=[rectangle, draw, text centered, rounded corners, align=center]
\tikzstyle{transition label}=[above, align=center]
\tikzset{snake it/.style={thick, decorate, decoration={snake,amplitude=1pt,segment length=0.3cm}}}
\tikzstyle{lane} = [draw, loosely dotted, ->, >=latex, line width=1.3]
\tikzstyle{message} = [draw, ->, >=latex, line width=1.5]
\tikzstyle{halfmsg} = [draw, -{Rays[]}, >=latex, line width=1.5]
\tikzstyle{intmsg} = [draw, ->, >=latex, line width=1.5]
\tikzstyle{tentmessage} = [draw, dashed, ->, >=latex, line width=1.5]
\tikzstyle{message node} = [sloped, align=center, above, pos=.5, font=\small]
\tikzstyle{command node} = [align=center,font=\small,]
\tikzstyle{lane label} = [align=left, font=\small]
\tikzstyle{figure letter} = [font={\LARGE\bf}, below=.2cm]
\newcounter{sarrow}
    \def\progratio{1.0}
    \def\bodyratio{0.7}
    \pgfmathsetlength{\pgfutil@tempdimb}{sqrt(4 / ((\progratio * \progratio) + 1)) * \pgfutil@tempdima}
    \pgfmathsetlength{\pgf@xc}{sqrt((\pgfutil@tempdima * \pgfutil@tempdima) - (0.25 * \pgfutil@tempdimb * \pgfutil@tempdimb))}
    \pgfmathsetlength{\pgf@yc}{\bodyratio * \progratio * \pgfutil@tempdimb}
    \def\progratio{1.0}
    \def\bodyratio{0.6}
    \pgfmathsetlength{\pgfutil@tempdimb}{sqrt(4 / ((\progratio * \progratio) + 1)) * \pgfutil@tempdima}
    \pgfmathsetlength{\pgfutil@tempdima}{\pgf@xb - \pgf@xa}
\newsavebox\scenariobox
\newsavebox\automatonbox
\newsavebox\bufferbox
\begin{document}
\title{Quantitative Programming by Examples}
\author{Sumit Gulwani and Kunal Pathak and Arjun Radhakrishna and Ashish Tiwari and Abhishek Udupa}
\institute{Microsoft Corporation, Redmond, WA}
\maketitle
\vspace{-5ex}
\begin{abstract}
Programming-by-Example (PBE) systems synthesize an intended program in some (relatively constrained) domain-specific language from a small number of input-output examples provided by the user. 
In this paper, we motivate and define the problem of quantitative PBE (\qpbe) that relates to synthesizing an intended program over an underlying (real world) programming language that also minimizes a given quantitative cost function. 
We present a modular approach for solving \qpbe that consists of three phases: intent disambiguation,
global search, and local search. 
%
%
On two concrete objectives, namely program performance and size,
our \qpbe procedure achieves $1.53 X$ and $1.26 X$ improvement respectively over  the baseline \flashfill PBE system, averaged over $701$ benchmarks.
Our detailed experiments validate the design of our procedure and show 
value of combining global and local search for \qpbe. 
\end{abstract}

\section{Introduction}
\label{section:introduction}
Programming-by-Example (PBE) systems synthesize an intended program in an
underlying domain-specific language from a small number of input-output 
examples provided by the user~\cite{pbe:aplas17}. 
Various PBE systems have been successfully deployed in practice; e.g.,
the \flashfill feature in Microsoft Excel for performing string 
transformations~\cite{flashfill}, 
the Extract-from-Web feature in PowerBI for extracting tables from
web pages~\cite{raza-17}, and the ConvertFrom-String cmdlet in Powershell
for extracting tables from custom text files~\cite{flashextract:pldi14}. 
These systems are popular with {\em end users}, who want to automate their
one-off repetitive tasks on small amounts of data, where correctness
can be verified quickly by examining the output.

Unfortunately, the PBE formalism does not provide the user any control on the 
nature of the synthesized program. 
For example, {\em data scientists} who deal with large data would like to 
direct a PBE system to generate efficient programs; while {\em developers}
who would like to incorporate synthesized programs as a part of their source
code would prefer that PBE systems generate small/readable programs in a 
specific programming language.

We model these new requirements on synthesized programs as quantitative objectives
and define the quantitative PBE problem (\qpbe).
The \qpbe problem asks a synthesizer to not only synthesize the intended program
in a rich target language, but to also produce a program that optimizes a given
cost function.
Our solution methodology is to leverage existing PBE solvers, which are adept
at synthesizing the intended program from a small number of input-output examples
by leveraging a ranking function over programs, albeit in a relatively constrained
domain-specific language (DSL). 
There are two key challenges to take care of: 
how to account for the cost function that expresses the quantitative objective,
and how to leverage constructs from the rich target language that can further 
improve the quantitative objective.

Our approach for \qpbe uses three phases to solve the problem.
We first invoke the PBE solver to generate an intended program in the DSL from the
small number of input-output examples.
We use this intended program to generate a more complete specification in the form of
a larger set of examples.  The goal of the first phase is intent disambiguation using
a small number of examples.

Our first key idea is to replace the ranking function of the PBE solver by a  custom
cost function tuned towards optimizing the objective of \qpbe.
Hence, in the second phase, we re-invoke the PBE solver, but with the more comprehensive set of examples
(which avoids the need for the intent-based ranking function) and using the objective-based
ranking function.
This step performs a global search on the space of DSL programs, and
yields a correct program in the DSL that is also optimal with respect
to the cost function. 

Our second key idea is to bridge the gap between the constrained DSL and the desired target language
by means of rewrite rules that describe transformations from the DSL to expressions that have natural
translation to the target language.
These rewrite rules are not semantics-preserving in general, but only when the inputs satisfy some
preconditions. We apply the rewrites to the program generated in Phase $2$ only when 
they are sound and objective-decreasing. Thus, the third phase performs local search starting from the
program generated in Phase $2$ to get the final program in the target language.


We evaluate our 3-phase approach on synthesis of small (and hence readable) and efficient string 
manipulation programs in Python from input-output examples.
We leverage the \flashfill PBE synthesizer for this purpose, which operates over a 
constrained domain-specific language that includes operators like substring, concatenate,
case conversion, and date parsing.
We show that our methodology generates programs that are significantly smaller and more 
efficient than those produced by the stand alone invocation of the \flashfill synthesizer.
For the performance and program size objectives, respectively, we generate programs that
are $2-3$ orders of magnitude faster and $1$ order of magnitude smaller than the ones
produced by \flashfill.

\section{Motivating Example}
\label{section:motivating_example}

\begin{wrapfigure}[6]{R}{0.63\textwidth}
\vspace*{-9mm}
\begin{mdframed}
\fontsize{8}{9}\selectfont
~~~~~~~~~~~~~~~~Input~~~~~~~~~~~~~~~~~~~~~~~~~~~~~~Output\\
\texttt{"06/08/2010 and 08/05/2010"} $\mapsto$ \texttt{"August 5, 2010"}\\
\texttt{"04/02/2008 and 03/31/2010"} $\mapsto$ \texttt{"March 31, 2010"}\\
\texttt{"04/02/2008 and 06/22/2015"} $\mapsto$ \texttt{????}
. . .
\end{mdframed}
\vspace*{-4mm}
\caption{\fontsize{8}{10}\selectfont A \textsf{FlashFill} task to
  reformat the $2^{nd}$ date in an input.}
\label{figure:motivating_example}
\end{wrapfigure}
Consider the data transformation task shown in  
\figref{motivating_example}---the user wants to transform the input sentence 
fragments on the left to the output on the right.
The promise of the PBE paradigm is this:
if a user can provide a few examples of this transformation, the PBE synthesizer 
will automatically figure out the user's intent and produce a program that can
perform the transformation in general.


\begin{figure}
\vspace{-7ex}
\fontsize{8}{9}\selectfont
\begin{lstlisting}
def parse_datetime(x, regex_str):
  posix_format = { 
    "year": "%Y", "month": "%m", "day": "%d", "day_of_week": "%a"
    . . .
  }
  match = regex.fullmatch(regex_str, x)
  fmt_str, val_str = "", ""
  for k, v in match.groupdict().items():
    fmt_str += posix_format[k] + " "
    val_str += v + " "
  return datetime.datetime.strptime(val_str, fmt_str)

def transform(x):
  r1 = # regex for matching a comma and/or the string "and" 
       # surrounded by arbitrary amounts of whitespace.
  r2 = # regex for matching a date in multiple formats.
  date_start_index = # end index of first match of r1 in x
  date_end_index = # end index of last match of r2 in x
  date_string = x[date_start_index:date_end_index]
  input_date_format = r"(?<month>\d{1,2})/(?<day>\d{1,2})/(?<year>\d{4})"
  dt_obj = parse_datetime(date_string, input_date_format)
  return dt_obj.strftime("%B ") + "{0:01d}".format(dt_obj.day) +
         dt_obj.strftime(", %Y")
\end{lstlisting}
\vspace{-4ex}
\caption{Program $P_1$: \flashfill produced code for the task in \figref{motivating_example}.}
\label{figure:phaseone_program}
\vspace{-5ex}
\end{figure}
\begin{figure}
\fontsize{8}{9}\selectfont
\begin{lstlisting}
def transform(x):
  try:
    dt_obj = datetime.datetime.strptime("%m/%d/%Y", x[15:25]) 
    return dt_obj.strftime("%B ") + "{0:01d}".format(dt_obj.day) +
           dt_obj.strftime(", %Y")
  except ValueError:
    return None          
\end{lstlisting}
  \vspace{-4ex}
\caption{Program $P_3$ to perform task shown \figref{motivating_example}.}
\label{figure:phasethree_program}
  \vspace{-6ex}
\end{figure}

Providing a single example to the \flashfill PBE system produces the 
program in \figref{phaseone_program}, which correctly performs the user-intended
transformation.
This program is very unlike any program that would be written by a human 
programmer for the same task---a human programmer might write a program
that is closer to  the one in \figref{phasethree_program}.

\boldpara{Hurdles to adoption of PBE.}
The difference between these programs illustrate some of the major hurdles facing
a more wide-spread adoption of PBE among data scientists and programmers.
Not only is $P_3$ more readable and compact than $P_1$, but also significantly more
efficient.
Readable and efficient programs are more likely to be used in practice because:
\begin{compactenum}[(a)]
  \item when using PBE to perform data preparation or data processing, 
    a user is often directly paying for the computation time, and given
    the prevalence of multi-terabyte datasets and streaming data in this domain,
    even small improvements in performance correspond to big reduction in operating costs,
  \item compactness and readability of a program is a proxy for its maintainability,
      which is extremely desirable for expert programmers, who often do not trust 
      (PBE) systems that produce unreadable code.
\end{compactenum}

\boldpara{Why are PBE produced programs different?}
Before we address the question of making PBE produced programs more compact and 
efficient, let us first examine why PBE programs are the way they are.
We frame these reasons in the context of the differences between $P_1$ and $P_3$.

\itpara{PBE programs are general.}
The first significant difference between $P_1$ and $P_3$ is manner in which
the $2^{nd}$ date is extracted.
Program $P_1$ locates this substring by using a combination of searches for the
constant \texttt{"and"} surrounded by arbitrary white space and a
regular expression representing dates in different formats.
On the other hand, the $P_3$ just picks the sub-string between indices $15$ and $25$.

The difference here is generality: $P_1$ works on a larger variety of inputs.
For example, $P_1$ can handle the input \texttt{"CAV 2019 is between 23/07/2019 and
26/7/2019 and is in NYU."} in a correct manner, while $P_3$ clearly does not.
{\bf The PBE synthesizer has generalized the program to handle a large
variety of inputs, as it does not know the kind of inputs the program is
meant to handle} ahead of time.
On the other hand, the human programmer knows the input format, and has optimized
the program for it.

\itpara{PBE systems use a domain-specific language.}
The second significant difference between $P_1$ and $P_3$ is the handling of date-time
operations.
Program $P_3$ uses simple calls to the native Python date-time library, while $P_1$
uses a complex wrapper.
This wrapper is present because {\bf PBE synthesizers generate programs in a domain-specific 
language optimized for synthesis}, which are then translated to Python.
The DSL has its own operators because it needs to support efficient synthesis on the task
domain, which is best done by being agnostic to the target languages.
\ignore{ 
The use of a DSL enables two important aspects of PBE systems:
\begin{compactitem}
  \item[Efficient synthesis.] 
      Each synthesis technique imposes requirements on the target language for efficient
      synthesis. For example,
      \begin{inparaenum}[(a)]
        \item \flashfill and \flashmeta require that the \emph{inverse semantics} of operators be computable,
        \item SyGuS solvers require that the operators be encodable in an SMT decidable theory
      \end{inparaenum}, etc.
      No high-level language comes close to satisfying any of these requirements.
  \item[Multi-targeting.] A DSL lets the synthesizer avoid reasoning about the 
      idiosyncrasies and limitations of any language library.
      Instead, programs are produced in a DSL and then translated to different target languages
      (Python, Java, etc).
      On the other hand, this introduces the complications of the DSL operation semantics into
      the synthesized program.
\end{compactitem}
\endignore}

\itpara{PBE synthesizers optimize for user interaction.}
A third, more subtle reason why $P_1$ is different from $P_3$ is not apparent from
the programs themselves, but the process by which \flashfill produced $P_1$.
\flashfill picked $P_1$ over other similarly general programs due to its
\emph{ranking function}---PBE synthesizers often produce a large number of
candidate programs and pick one.
Most {\bf PBE ranking functions are optimized to converge in on user intent with 
the fewest examples}.
This UX oriented factor underlies both the generality of programs, the design of the DSL, as well
as other artifacts of PBE-produced programs.

\boldpara{Optimizing PBE-produced programs.}
We address the problem of optimizing PBE programs (in particular, for 
compactness and performance).
The three points mentioned above, in a manner of speaking, act as constraints 
to any PBE optimization procedure:
\begin{inparaenum}[(a)]
\item We want to minimize the number of examples required to converge to 
  the user intent. 
\item The ``PBE'' part of the optimization procedure has to operate on a DSL, 
  while optimal programs go beyond the DSL.
\item The ``generality towards inputs'' issue can only be solved by specifying
  the set of intended inputs.
\end{inparaenum}

To this end we use a three-phase optimization procedure:
the procedure is given a set of examples $E$ as usual, and in addition,
a set of inputs $\mathsf{pre}$ (explicit or symbolic) on which the synthesized 
program is expected to work on.
\begin{compactitem}
  \item The first phase is a standard PBE run to produce a program.
    This step solves the issue of minimizing user interaction.
    On our example task, \flashfill produces $P_1$.
  \item The program $P_1$ acts as an \emph{equivalence specification} on $\mathsf{pre}$ 
    for the second phase, i.e., we want the program generated by the second phase
    to be behaviorally equivalent to $P_1$ on inputs in $\mathsf{pre}$.
    In the second phase, we use a PBE synthesizer with a compactness or 
    performance-based ranking function to produce a program $P_{12}$.
    This $P_{12}$ is still a program in the DSL---it does not use optimal
    target language specific operators.
  \item In the third phase, $P_{12}$ is rewritten via local enumerative search;
    Sub-expressions in $P_{12}$ are rewritten with more optimal sub-expressions
    from the target language to obtain $P_{123}$.
    Again, $P_{123}$ needs to behave as $P_{12}$ on $\mathsf{pre}$.
\end{compactitem}
On our running example, $P_{12}$ (shown in Appendix~\ref{sec:app_2}) uses the 
more optimal method (indexing between $15$ and $25$) to select the date, but 
still uses DSL specific date-time operations.
These DSL specific operations are then rewritten to native Python function calls in
the third phase---in our evaluation, $P_{123}$ was identical to $P_{3}$ shown in 
\figref{phasethree_program}.
As per our measurements, $P_{123}$ is \textbf{2.79X} faster than $P_1$ and \textbf{1.62X}
faster than $P_{12}$ on the given data-set. 


\ignore{
===== AR: upto here. ====

The function \texttt{transform} implements the following logic. The regular
expressions (regexes) \texttt{r1} and \texttt{r2} are complex regexes that
are as described in lines 14 and 16\footnote{The regexes themselves are not
  shown here: they are complex, verbose and not particularly
  interesting}. The synthesized program first determines the index, in the
input string \texttt{x}, where the first match for the string ``and'' ends,
using the regex \texttt{r1}, in line 17. It then determines the index, in
\texttt{x}, where the \emph{last} match for a date regex (\ie, \texttt{r2})
ends (line 18). The substring containing the date is extracted using
these two indices in line 19. The format string in line 20 is a regular
expression that extracts the numeric components corresponding to the month,
day and year components of the date string, and retrieves the matches in
named groups. This format string, together with the date substring
extracted earlier are then passed to a generic date time parsing function
\texttt{parse\_datetime}, which in turn maps the group names (month, day,
year) and the values matched for the groups in \texttt{x} into their
appropriate POSIX parsing format strings. Finally, the date is parsed using
the standard \texttt{strptime} function and returned.

The program $P_1$ is quite general and displays robustness to variations in
the exact shape of the input along the following dimensions:

\boldpara{Generality of regular expressions.}
  The regexes \texttt{r1} and \texttt{r2} are quite general: the regex
  \texttt{r1} matches an optional (Oxford) comma followed by the string
  ``and'' while allowing arbitrary amounts of whitespace (including
  newlines) between the comma and the string ``and'' as well as immediately
  preceding and succeeding them.  The use of this regex \texttt{r1}, that
  is anchored around the string ``and'' makes the extraction logic
  independent of the part of the string \emph{before} the ``and.'' Similarly
  the regex \texttt{r2} is rather general, and can match dates in an
  extensive variety of formats.
  
\boldpara{Generality in allowed date formats.}
  The date format shown in line 20 (for actually parsing the dates) is also
  general: it allows for a variable number of digits in the month and day
  components of the date. For instance, it can parse the dates
  ``08/05/2010'', ``8/5/2010'', ``08/5/2010'', etc.
  
\boldpara{Generality in date parsing.}
  \flashfill implements its own custom date parsing algorithms, for the
  following reasons: (1) to avoid the idiosyncrasies, limitations and
  language-specificity of existing date parsing libraries, and (2) to
  ensure that the date parsing routines have simple inverse semantics:
  recall that \flashfill is a top-down PBE system, which requires that all
  operators in the \flashfill DSL have well-defined inverse
  semantics~\cite{flashfill,polozov-15}. The datetime parsing algorithm
  (shown in \texttt{parse\_datetime}) is the Python translation of
  \flashfill's custom date parsing routines. It is very general and can be
  adapted to arbitrary formats by varying only the second parameter --- the
  regular expression with correctly named capture groups.

The program $P_1$ assumes a very weak invariant on the structure of its
inputs. Informally, it assumes that its inputs contain a date string ---
whose format is loosely specified --- following a comma, and/or the string
``and'' surrounded by an arbitrary amount of whitespace. By requiring the
inputs to only satisfy a weak invariant, the PBE algorithm converges to a
program intended by the user with only a few input-output examples.

However, the data might actually satisfy much stronger invariants. For
instance, in our example, we observe that the dates are all in a uniform
format across the rows, and the application of whitespace is also very
regular. Leveraging these stronger invariants might result in a program
that accomplishes the user's task in a more efficient manner.

\begin{figure}
\fontsize{8}{9}\selectfont
\begin{lstlisting}
def transform(x):
  date_string = x[15:25]
  input_date_format = r"(?<month>\d{2})/(?<day>\d{2})/(?<year>\d{4})"
  dt_obj = parse_datetime(date_string, input_date_format)
  return dt_obj.strftime("%B ") + "{0:01d}".format(dt_obj.day) +
         dt_obj.strftime(", %Y")
\end{lstlisting}
\caption{Python code for a program $P_2$ that is equivalent to $P_1$
  on the data formatted as shown in \figref{motivating_example}.}
\label{figure:phasetwo_program}
\end{figure}
As an example of a more efficient program, consider the program $P_2$,
shown in \figref{phasetwo_program}, which is identical to $P_1$, except
that (1) it uses the constants \texttt{15} and \texttt{25} as the start and
end indices of the date string to be extracted from \texttt{x}\footnote{the
function \texttt{parse\_datetime} is unchanged
from \figref{phaseone_program}.}, and (2) it assumes that the month and day
components of the date contain exactly two digits each. We note that the
program $P_2$ is more efficient than $P_1$: the start and end indices do
not require matching a regular expression in $P_2$. Further, although $P_2$
is not semantically equivalent to $P_2$, the behavior or $P_2$
is \emph{identical} to the behavior of $P_1$ on the user's data, which is
formatted as shown in \figref{motivating_example}. In other words, $P_2$
exploits the fact that the user's data satisfies a \emph{stronger
invariant} than assumed by $P_1$ --- i.e., the data contains a date string
between the indices 15 and 25, and that the date string contains a date in
the ``mm/dd/yyyy'' format. $P_2$ leverages this stronger invariant to
employ a \emph{more efficient} algorithm than $P_1$, while remaining
\emph{indistinguishable} from $P_1$ on the user's data. Empirically, we
found $P_2$ to be \textbf{1.72X} faster than $P_1$ on the data set shown in
\figref{motivating_example}.

Although the program $P_2$ is more efficient than $P_1$ on the given data,
it still uses the very general datetime parsing routine from $P_1$. This is
because $P_2$ is the most efficient program for this task \emph{which is
still contained} in the \flashfill DSL. Given that our ultimate goal is to
obtain a Python program\footnote{The \flashfill DSL is distinct from the
Python programming language.} that accomplishes the task shown in
\figref{motivating_example}, we can leverage the language and library
constructs of Python to obtain Program $P_3$ shown in
\figref{phasethree_program}.  Again, $P_3$ need not be semantically
equivalent to either $P_1$ or $P_2$. However, $P_3$
is \emph{indistinguishable} from both $P_1$ and $P_2$ on the dataset under
consideration. $P_3$ is \textbf{2.79X} faster than $P_1$ and \textbf{1.62X}
faster than $P_2$ on the given dataset. This speedup is obtained by
utilizing the Python datetime parsing libraries \emph{directly}, rather
than indirectly using the generic \flashfill date parser. Both $P_2$ and
$P_3$ require the same invariant on their inputs. However, the stronger
invariant inferred by $P_2$ \emph{enabled} further optimization of $P_2$ to
yield $P_3$.

In the sequel, we describe techniques to derive the programs $P_2$ and
$P_3$, starting from $P_1$. We formally define the problem of PBE synthesis
with quantitative objectives in
\secref{general_problem}, and describe an abstract solution strategy that
leverages existing PBE technologies. \secref{general_framework} provides a
concrete instantiation of the abstract scheme proposed in
\secref{general_problem} using the \flashfill PBE system. We discuss the
results of an experimental evaluation of this algorithm in
\secref{experimental_evaluation}. Finally, we survey closely related work
in \secref{related_work} and conclude in \secref{conclusion}.
}

\section{The General Quantitative PBE Problem}
\label{section:general_problem}


The goal of {\em{Programming-by-example (PBE)}} is to synthesize a program
$p$ --- that transforms values from an input domain $D_i$ to values in an
output domain $D_o$ --- from an incomplete specification given as a
{\em{small}} set of input-output examples $\esmall$.
The synthesized program $p$ is expected
to work correctly not only on the inputs in $\esmall$, but also on a larger
set $\fullinputset$ of inputs (that includes the inputs in $\esmall$).  The
set $\fullinputset$ may be represented as an explicit enumeration of its
elements (as is the case in our experiments), or it could be represented
symbolically.  Let $\specification{\fullinputset}
: \fullinputset \rightarrow D_o$ be an unknown (or black-box) function
that can be queried to provide the correct (or user-intended) output for a
specific input $i \in \fullinputset$. Assume that the results of any
queries (on $\specification{\fullinputset}$) made by a PBE algorithm are
accumulated in the set of input-output
examples $\esmall \subseteq \fullinputset \times D_o$. In other words, we have for
every $(i, o) \in \esmall$, $\specification{\fullinputset}(i) = o$.
The objective of PBE is to synthesize a program that meets the
specification $\specification{\fullinputset}$, while minimizing
$|\esmall|$. 
It is important to minimize $|\esmall|$ because the user is expected to
play the role of $\specification{\fullinputset}$, and the goal is to
minimize the cognitive load on the user.


There is plenty of existing work that addresses the PBE challenge.  In this
paper, we are interested in an {\em{extension of the problem}} where the
user is not interested in just {\em{any}} program $p$ that is
correct on the input set $\fullinputset$, but a program that is
also \emph{optimal} with respect to some user-defined metric.
Specifically, the user wants a program $p$ in a {\em{real-world}} target
language (e.g., Python, Java, etc.), say $\lang$, that minimizes a given cost function $c$. The cost
function maps a program in the target language $\lang$ to a nonnegative
real number.  Thus, the user wants to minimize (a) the number of examples
in $\esmall$ that need to be provided, as well as (b) the cost $c(p)$ of the
synthesized program $p$.

The solution strategy that we propose in this paper is applicable to any
suitably defined quantitative cost metric. However, in this paper we
consider two specific cost metrics: {\em{performance}} and {\em{size}} of
the generated program in the target language Python.  The motivation for
performance is clear: users want to run the synthesized programs --- often
in cloud computing environments --- on large datasets --- which can contain
millions of rows --- and wish to minimize resource utilization, and hence
cost.  The motivation for minimizing size comes from generating programs
that are easier for the user to quickly read, understand, and possibly even
edit.  While size is not a sole contributor to a program's readability, it
is a well-defined quantitative metric that we use as a proxy for
readability, and as a first approximation.

Formally, we study the following general quantitative PBE (\qpbe) problem.
\begin{definition}[\qpbe]
\label{def:qpbe}
Let $\lang$ be a fixed target language. Let $c: \lang \rightarrow \RR^+$ be a
fixed cost function that maps programs in $\lang$ to a non-negative
cost. Let $D_i$ and $D_o$ be the domains of input and output values
respectively. Let $\fullinputset \subseteq D_i$ be a symbolic or explicitly
enumerated restriction (or precondition) on $D_i$. Let
$\specification{\fullinputset} : \fullinputset \rightarrow D_o$ be an
implicit and unknown function that describes the input-output behavior
of a desired program. Let $\esmall \subseteq \{(i, o) \mid i \in D_i, o
= \specification{\fullinputset}(i)\}$ be a small set of input-output
pairs that are obtained by invoking $\specification{\fullinputset}$ for
specific values $i \in \fullinputset$. Then, the {\em{\qpbe problem}} is to
find a program $p \in \lang$ such that
\\
(1) (Correctness) $p$ satisfies the specification $\specification{\fullinputset}$;
i.e, $\forall i \in \fullinputset . p(i) = \specification{\fullinputset}(i)$,
\\
(2) (PBE objective)  $|\esmall|$ is minimal, \ie, the number of queries
made to $\specification{\fullinputset}$ during the process of finding $p$
is minimal, and
\\
(3) (Cost objective) the cost $c(p)$ of the synthesized program $p$ is
minimal (among all programs that satisfy the correctness objective).
\end{definition}

Note that the \qpbe problem involves optimizing for two objectives, where
one objective is inherited from PBE, and the other is a cost objective. The
PBE objective is a requirement for the \emph{synthesis algorithm to be
effective}, whereas the cost objective is a requirement on the \emph{output
of the synthesis algorithm}. So, even though the \qpbe problem appears to
be a multi-objective optimization problem, the two objectives live in
different dimensions and we exploit this separation in our solution.

\begin{wrapfigure}[9]{R}{0.5\textwidth}
\vspace*{-11mm}
  \centering
  \scalebox{0.8}{
\begin{tikzpicture}
  \path
  node[input box, inner sep=2mm] (specification) {Examples\\$\esmall$}
  node[basic box=blue, right=4mm of specification, inner sep = 2mm] (synthesizer) {PBE\\Synthesizer\\\textsc{synthesize}}
  node[input box, above=4mm of synthesizer, inner sep=2mm] (dsl) {DSL $\dsllang$}
  node[input box, below=4mm of synthesizer, inner sep = 2mm] (ranker) {Ranking function $\rankingscheme$}
  node[basic box=white, right=4mm of synthesizer, draw=none] (program) {Top-ranked\\Program\\(w.r.t. $\rankingscheme$)}
  ;
  \path[data line,->]
  (ranker) edge[data line, solid] (synthesizer)
  (specification) edge[data line, solid] (synthesizer)
  (dsl) edge [data line, solid] (synthesizer)
  (synthesizer) edge [data line, solid] (program)
  ;
\end{tikzpicture}
  }
  \caption{\fontsize{8}{10}\selectfont An abstract representation of a PBE system}
  \label{figure:abstract_pbe_system}
\end{wrapfigure}
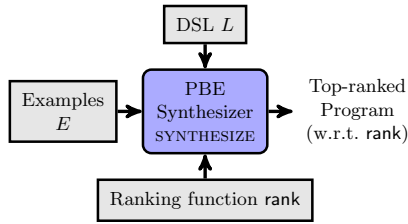
Our high-level approach for \qpbe relies on solvers for the PBE problem,
and hence we briefly\footnote{We provide a very high-level and informal
description of PBE systems for the sake of completeness. The reader is
referred to earlier work~\cite{polozov-15} for more details.} discuss the
workings of a PBE engine. PBE engines achieve the goal of learning from as
few examples as possible by using a {\em{ranking function}} that orders
programs (in the set of all programs that are consistent with the few
examples provided) by their estimated likelihood of being the user-intended
program. Hence, modern PBE systems can be viewed as learners that do not
leverage knowledge of $\fullinputset$, but only take a set of examples and
a ranker and return the highest ranked program that is consistent with the
set of examples.

\subsection{An Overview of PBE Systems}

A PBE system (shown in \figref{abstract_pbe_system}) is parameterized by a
domain specific language (DSL) $\dsllang$, and a ranking function
$\rankingscheme$. Given a set of input-output examples $\esmall$,
the PBE synthesizer returns the top-ranked program (with respect to
$\rankingscheme$) in $\dsllang$ that satisfies all of the input-output
examples in $\esmall$.

\paragraph{Domain Specific Language (DSL) $\dsllang$.}
A DSL is used to restrict the search space for a synthesizer and
consists of a set of operators along with a grammar.
We assume that the DSL is specified as a
context free grammar (CFG) with a designated \emph{start symbol}. Given a
CFG $L$, we define $\prog(L)$ to be the set of all programs derivable from
the start symbol of $L$. Further, we define $\subprog(L)$ to be the set of
all (sub)programs derivable from \emph{any} terminal or non-terminal symbol
in $L$.

\begin{wrapfigure}[11]{R}{0.5\textwidth}
  \vspace*{-7mm}
  \fontsize{8}{10}\selectfont
  \setlength{\tabcolsep}{2pt}
  \centering
  \begin{tabular}{lcl}
    \gtype{string} \gsymbol{start} & \expandsto & \gsymbol{e};\\
    \gtype{string} \gsymbol{e} & \expandsto & \gsymbol{f} \!$\vert$\! \goperator{concat}\!\!(\gsymbol{f},\! \gsymbol{e}\!\!);\\
    \gtype{string} \gsymbol{f} & \expandsto  & \goperator{conststr}\!\!(\gsymbol{s})\\
    & $\vert$ & \goperator{substring}(\gsymbol{in}, \gsymbol{pp});\\
    \gtype{(int, int)} \gsymbol{pp} & \expandsto & (\gsymbol{pos}, \gsymbol{pos})\\
    \gtype{int} \gsymbol{pos} & \expandsto & \gsymbol{k} \!$\vert$\! \goperator{regexpos}\!\!(\gsymbol{in},\! \gsymbol{r},\! \gsymbol{r},\! \gsymbol{k}\!\!);\\
    \gtype{int} \gsymbol{k}, \ \gsymbol{idx};\\
    \gtype{regex} \gsymbol{r};\\
    \gtype{string} \gsymbol{s};\\
    \lbrack\gtype{input}\!\!\rbrack\gtype{string} \gsymbol{in};
  \end{tabular}
  \caption{\fontsize{8}{10}\selectfont A simplified version of the \textsf{FlashFill} DSL}
  \label{figure:flashfill_dsl}
\end{wrapfigure}
\itpara{Example: The \flashfill Grammar.}
\figref{flashfill_dsl} shows a simplified version of the \flashfill DSL for
synthesizing text transformations, given a single input \texttt{string} and
producing a single \texttt{string} as output. The start symbol of this grammar is \emph{e}, which can expand to
either a single string \emph{f} or a concatenation of two or more strings
(via the rule \emph{e} := \goperator{concat}(\emph{f}, \emph{e})). The
symbol \emph{f} can in turn expand to either a constant string (represented
as $\goperator{constr}(s)$), or a substring of some string in the
input. The symbol \emph{ss} represents a substring of a string from the
input array,
which can be computed using the start and end positions (represented by the
symbol \emph{pp}). The positions themselves can either be absolute positions or
regular expression based positions. Finally, the symbols \emph{k},
\emph{idx}, \emph{r} and \emph{s} are terminal symbols, and \emph{in} is
the designated input symbol.

\paragraph{Ranking Function.}
A \emph{ranking function} \rankingscheme for a DSL \dsllang is defined as a map 
from $\subprog(\dsllang)$ to $\RR^+$.
Ranking functions
are used in PBE systems to impose some ordering on the \dsllang-(sub)programs.
They may also be used to prune parts of search space of possible programs when only a
top-ranked program is desired. Recall that a PBE system is expected to work
with a partial specification.
When given only a few examples, there may exist a large number of
semantically distinct programs in $L$ that are all consistent with the given 
examples. The default ranker in a PBE system is often highly-tuned to choose 
\emph{one} program that is most likely to produce results a user might
expect when executed on new and unseen data; that is, the default rankers are
designed to work with minimal example sets $\esmall$, while still generating
programs that meet $\specification{}$.


\subsection{A High-Level Approach for \qpbe}\label{subsec:meta}

\ignore{
The \qpbe problem in Definition~\ref{def:qpbe} is still too abstract and general to admit
a concrete automated solution. We specialize the problem in this section, but we do so
using a very general high-level approach. We believe that the general approach can also
be used on classes of \qpbe problem that are different from those studied in this paper.
\endignore}

Our high-level approach consists of reusing existing PBE engines to handle the PBE objective,
as well as a part of the cost objective, and using rewriting to eventually optimize the cost
on the target language $\lang$.
Since existing PBE technology is unlikely to work directly on the grammar for $\lang$, we design a 
domain-specific language $\dsllang$ to perform synthesis, and then extend the solution on $\dsllang$ 
to a solution over $\lang$.
In detail, the high-level approach starts with the \qpbe problem from Definition~\ref{def:qpbe}
and performs the following steps {\em{manually}}:
\begin{compactenum}
    \item Design a DSL, $\dsllang^e$, whose programs translate to a
        program in $\lang$ using a translator, $\translate: \prog(\dsllang^e) \mapsto \lang$.
        The translator is semantics preserving.
    \item 
        The DSL $\dsllang^e$ may not be ideal for synthesis using a PBE engine. 
        Pick a sub-DSL, $\dsllang$,  of $\dsllang^e$, such that 
        synthesis can be efficiently performed on $\dsllang$ using a PBE
        engine. One can view $\dsllang^e$ as extending $\dsllang$ with
        constructs and standard library routines of the target language
        $\lang$ that will be useful in translating a program in $\prog(\dsllang)$
        to a program in $\lang$.
    \item Design a ranker $\rankpbe: \subprog(\dsllang)\mapsto\RR^+$ that is optimized 
        for solving PBE on $\dsllang$. In other words, $\rankpbe$ is
        designed to find the user-intended program (in $\prog(\dsllang)$)
        using as few examples as possible.
    \item Using the cost metric $c$, which is defined on programs in $\lang$, design a 
        ranker $\rankc: \subprog(\dsllang^e)\mapsto\RR^+$ such that, for
        all $p,p'\in\prog(\dsllang^e), \rankc(p) > \rankc(p') \Rightarrow
        c(\translate(p)) < c(\translate(p'))$.
    \item 
        Design a set $\RRules$ of rewrite rules from programs in
         $\dsllang$ to programs in $\dsllang^e$ such that with each
         rule $\rho \equiv p\rightarrow p'$, there is an
         associated precondition $\psi_\rho$ such that
         $p\models\specification{} \Rightarrow
         p'\models\specification{}$ whenever the inputs in
         $\specification{}$ satisfy the precondition $\psi_\rho$.
\end{compactenum}
Note that the rewrite rules are not semantic-preserving in general, but
they are designed so that whenever the inputs satisfy some precondition,
the rewritten program's behavior matches that of the original program.
In theory, we can include any rewrite rule in $\RRules$ since $\psi_\rho$ can
always be picked to be the empty set in the worst-case. 
In practice, it may be desirable to restrict the size of $\RRules$ and
the nature of the rewrite rules in $\RRules$ to ensure an efficient
implementation. 

These steps are performed manually for the fixed target language $\lang$
and cost metric $c$.
In many cases, one can re-use existing DSLs. 
For example, when the target language changes, we could reuse the same
$\dsllang^e$, $\dsllang$, and the rewrite rules $\RRules$, and just change the
translator $\translate$.  When solving a new \qpbe problem, in practice, we
perform the meta-procedure above as follows: (1) use an existing
DSL and its default ranker as $\dsllang$ and $\rankpbe$
respectively, (2) write a baseline translator $\translate$ to go from
$\dsllang$ to the target language $\lang$, (3) in the first attempt, use
$\dsllang$ also as $\dsllang^e$, and empty set $\RRules=\emptyset$ of rewrite
rules, (4) design the ranking function $\rankc$ based on cost
metric $c$.  We start with this solution, and experimentally evaluate the
difference between the program generated using the above choices
and the desired program.
The desired program can be obtained by mining source code repositories,
using user studies, or plain manual inspection.
If the desired output programs contain language
features that do not have a direct analogue in $\dsllang$, then we extend
$\dsllang$ to $\dsllang^e$ and write (unsound) rewrite rules for going from
$\dsllang$ to $\dsllang^e$, while also extending the translator and
$\rankc$. 
After the above meta-steps, we get a specific and concrete \qpbe problem.

\subsection{\qpbe Modulo PBE}
In this section, we define a concrete and specific class of \qpbe problem.
We assume that we are given a PBE engine, $\mathsf{PBE}$, that is
parameterized by a DSL and a ranking function (as shown
in \figref{abstract_pbe_system}).
\begin{definition}[\qpbe/PBE]\label{def:qpbe-pbe}
    The \qpbe/PBE problem instance is an instance of the \qpbe problem, defined in Definition~\ref{def:qpbe},
    where we are additionally given the following artifacts designed based on the target language $\lang$ and
    cost metric $c$:
    (1) the DSL $\dsllang$ and its extension $\dsllang^e$,
    (2) a ranker $\rankpbe$ on $\dsllang$ that optimizes for 
    the number of examples required to disambiguate user-intent,
    (3) a ranker $\rankc$ on $\dsllang^e$ that optimizes for
    the cost of the translated program,
    (4) a set $\RRules$ of rewrite rules from programs in $\dsllang$ to programs in $\dsllang^e$, and
    (5) a translator $\translate$ from $\dsllang^e$ to the target language $\lang$.
\end{definition}

\vspace{-2ex}
\section{An Algorithm for \qpbe/PBE}
\label{section:general_framework}
\vspace{-1ex}
We describe our solution to the \qpbe/PBE problem.  In the rest of this
paper, we assume that a PBE system can be used to synthesize the top-ranked
program in a DSL $\dsllang$ given input-output examples $\esmall$
and a ranking function $\rankingscheme$. Specifically, we assume we have
access to a procedure call
$\mbox{\textsc{synthesize}}(\esmall, \dsllang, \rankingscheme)$,
which returns the top-ranked program in $\prog(\dsllang)$ with respect
to $\rankingscheme$ that satisfies $\esmall$.

\subsection{Three-Phase Algorithm for \qpbe/PBE}

\begin{algorithm}[!t]
  \caption{\qsynth: Synthesize a optimal program by examples.}
  \label{algorithm:general_algorithm}
  \begin{small}
    \DontPrintSemicolon
    \SetKwInOut{Inputs}{Inputs}
    \SetKwInOut{Output}{Output}
    \SetKwInOut{Data}{Data}
    \SetKwRepeat{Do}{do}{while}
    \Inputs{
        $\dsllang$, $\dsllang^e$, $\translate$, $\rankpbe$, $\rankc$,
        $\RRules$, $\esmall$, $\fullinputset$
    }
    \Output{
      A program $P$ as described in Definition~\ref{def:qpbe}.
    }
    \vspace*{2mm}
    $p_1 \leftarrow {\textsc{synthesize}}(\esmall, \dsllang, \rankpbe)$\;
    \label{line:p1}
    \If{$p_1 \equiv \bot$}{
      \label{line:early_fail}
      \Return $\bot$\;
    }
    ${\esmall}_{\equiv} \leftarrow \left\{(i, p_1(i)) : i \in \mathsf{representative\_sample}(\fullinputset)\right\}$\;
    \label{line:build_equiv_spec}
    $p_2 \leftarrow {\textsc{synthesize}}(\esmall_\equiv, \dsllang, \rankc)$\;
    \label{line:p2}
    $p_3 \leftarrow {\textsc{enumerative\_synth}}(\esmall_\equiv, \dsllang^e, \rankc, p_2, \RRules)$\;
    \label{line:p3}
    \Return $\translate(p_3)$\;
    \label{line:end}
  \end{small}
\end{algorithm}

The procedure for \qpbe/PBE, presented in
\algoref{general_algorithm}, consists of three phases.
The objective of the first phase, {\em{Intent Disambiguation}}, is to
obtain a close approximation (in the form of a large number of input-output
examples) of the complete specification $\specification{\fullinputset}$
using as few examples as possible.  The second phase,
{\em{Global Search}}, finds a candidate program that is ``close'' to
being optimal from the global search space.  The third phase, {\em{Local
Search}}, performs local rewrites to the program found in the second phase
to yield an optimal program.

\paragraph{Intent Disambiguation.}
In the first phase, we use the PBE engine to solve the PBE problem
contained in the \qpbe problem.  Specifically, we synthesize a program
$p_1$ over the DSL $\dsllang$ while minimizing the number of input-output
examples in $\esmall$, using a ranker $\rankpbe$ optimized for learning the
user-intended program (i.e. program that satisfies the user-intended
specification $\specification{\fullinputset}$).  The
PBE engine can fail to find a program, in which case the whole process
terminates with failure.  This happens when there is no program in
$\prog(\dsllang)$ that satisfies all the input-output examples in
$\esmall$.  However, when this is not the case, the PBE engine will find a
program $p_1$. Now, $p_1$ is the program predicted to be \emph{most likely
to match the user's intent} by the ranker $\rankpbe$. Thus, any program
$p'$ that is \emph{behaviorally equivalent} to $p_1$ on $\fullinputset$
should also match the user's intent just as well. We leverage this
intuition, and use $p_1$ to create a set of input-output examples that
comprises an \emph{equivalence specification} $\esmall_{\equiv}$: the
output corresponding to each input $i \in \fullinputset$ is simply
$p_1(i)$.

However, executing $p_1$ on the \emph{entire} set $\fullinputset$ defeats
the point: why bother with finding a better program if the desired data
transformation task has already been accomplished, by running $p_1$ on the
entire set of inputs $\fullinputset$? To avoid this, we instead construct
$\esmall_{\equiv}$ by executing $p_1$ on a \emph{small and representative
sample} of $\fullinputset$, as shown in line~\ref{line:build_equiv_spec} in
Algorithm~\ref{algorithm:general_algorithm}. A representative sample can
be obtained by first clustering the input data using a technique
like \flashprofile~\cite{matching-text}, and then performing a stratified sampling over the
clusters.


\paragraph{Global Search.}
We could translate $p_1$ to a program $\translate(p_1)$ in $\lang$.
However, the cost $c(p_1)$ of this program is unlikely to be low.
After all, $p_1$ was derived completely
independently of $c$.
Hence, in the second phase, we use the
ranking function $\rankc$ (which is based on the cost function $c$) to do a
global search over $\prog(\dsllang)$ to find a program $p_2$ which is
behaviorally equivalent to $p_1$, but is optimal with respect to $c$, as
shown in line~\ref{line:p2} in Algorithm~\ref{algorithm:general_algorithm}.
Note that the PBE engine can use any method to perform synthesis;
for instance, even though it has access to a large set of input-output
examples, the PBE engine can still use the Counter-example Guided Inductive
Synthesis~\cite{solar-lezama-05} paradigm and incrementally expand the set
of examples actually used in the synthesis.

\paragraph{Local Search.}
The last phase of our quantitative PBE synthesis procedure involves
rewriting the program $p_2$ computed by the global search phase to an
optimal program in the DSL $\dsllang^e$.  Recall that $p_2$ is a program in
DSL $\dsllang$, but the target language is not $\dsllang$. The grammar
$\dsllang^e$, which is closer to the grammar for the target language
$\lang$, may contain function symbols that have no direct counterparts in
$\dsllang$. Hence, going from $\dsllang$ to $\dsllang^e$, exposes
opportunities for optimization, which are leveraged in the third phase.  We
call this phase local search since it does not significantly change the
overall logic of the program $p_2$, but only maps it into a program
$p_3 \in \prog(\dsllang^e)$ such that $p_3$ can be translated into a higher
ranked program in $\lang$, than $p_2$, with respect to $\rankc$.

A final key and interesting insight is that the third phase is yet another
synthesis problem, but with a slightly different formulation. Recall that
we have a set of rewrite rules $\RRules$ that consist of possible
transformations that may be done on a program when going from $\dsllang$ to
$\dsllang^e$.

The search space of possible programs is defined as the set of all programs
reachable from the {\em{starting term}} $p_2$ using the rewrite rules
$\RRules$. From this space, we need to find one that
satisfies $\esmall_\equiv$ and is highest ranked by the ranker $\rankc$.

\begin{definition}[Local \qpbe]\label{def:local-qpbe}
    Given a signature $F$ of function symbols and constants, an initial
    term $t$ constructed using the signature $F$, rewrite rules $\RRules$
    that transform a term to another term, a ranker $\rankc$ that maps
    terms to rationals, and an input-output example based equivalence
    specification $\esmall_{\equiv}$, Local \qpbe seeks to find a term $t'$
    reachable from $t$ using zero or more applications of rules in $RR$
    such that (1) $t'$ satisfies the equivalence specification
    $\esmall_{\equiv}$, and (2) the term $t'$ is highest ranked term in the
    reachable set.
\end{definition}

The symbols in $F$ are assumed to have executable (operational) semantics.
This allows us to determine if a given term meets a given specification by
just computing the interpretation of the term on the input.

A naive procedure for solving the local \qpbe problem is as follows:
enumerate the reachable terms and prune out the terms that are either not correct
(do not satisfy the specification) or are lower in rank.
Some PBE engines that support enumerative search (bottom up synthesis) can be adapted
to solve the local \qpbe synthesis problem.
Let ${\textsc{enumerative\_synth}}(\esmall_{\equiv}, \dsllang^e, \rankc, p, \RRules)$
denote a enumerative search based synthesis procedure that solves the local \qpbe problem.
We use this procedure to perform the final local search. If $p_3$ is the program in $\dsllang'$
synthesized by the enumerative search procedure, then we return 
$\translate(p_3)$ as the final answer to the user.

\ignore{
\vspace*{1mm}
The procedure detailed above overcomes all three design challenges
for a PBE optimization procedure mentioned in Section~\ref{section:motivating_example}/

\emph{The intent disambiguation} phase lets the user provide very few
examples, and perform synthesis
using $\rankc$ from the get-go instead?
This results in a user having to provide a larger number of examples.

\vspace*{1mm}
\boldpara{What ifs, buts, and why nots.}
All the answers in this section are validated by experimental results
presented in \secref{experimental_evaluation}, see RQ2 in particular.
}

\begin{remark}
\emph{Why not simply perform the global search over $\dsllang^e$, and skip the local search phase?}
In general, synthesis over $\dsllang^e$ may not be feasible.
First, a general purpose $\dsllang^e$ will induce a significantly larger
search space.
Second, most program synthesis techniques require additional properties 
over the language: For example, \flashfill and \flashmeta
require operators to have inverse semantics~\cite{polozov-15}; Sketch and
SyGuS solvers require operators to be encodable in a decidable SMT theory.

\emph{What if $\mathsf{representative\_sample}(\fullinputset)$ does not
yield a truly representative sample?}
In this case, our approach (and any PBE approach for that matter) may not
yield a program that matches the user's intent.
PBE systems provide very weak guarantees in general:
the end-user is the best judge of correctness.
\end{remark}

\ignore{
\vspace*{1mm}
\noindent
\emph{Why not skip the intent disambiguation phase and perform synthesis
using $\rankc$ from the get-go instead?}\\
This results in a user having to provide a larger number of examples.

\vspace*{1mm}
\noindent
\emph{Why not simply perform
the global search over $\dsllang^e$, and skip the local search phase?}\\
Synthesis over $\dsllang^e$ may not be feasible. There are
several reasons why this is the case. First, $\dsllang^e$ induces a much
larger search space.  Second, $\dsllang^e$ often contains symbols that do
not have the properties (such as efficiently computable inverse semantics,
referred to as ``witness functions'' in earlier work~\cite{polozov-15})
needed for efficient deductive (top-down) synthesis. However, these symbols
always have computable forward semantics, which allows the use of
enumerative synthesis, but not top-down synthesis (which is typically used
in the global search phase).  Enumerative synthesis is expensive and
prohibitive if the search space is large. However, by carefully choosing
the set of rewrite rules $\RRules$ in the local search phase, we can ensure
that the search space in the local search phase remains amenable to
enumerative techniques.

\vspace*{1mm}
\noindent
\emph{Why not just stop after the global search phase?}\\
This leads to missing out on some target language specific optimizations.

\vspace*{1mm}
\noindent
\emph{What if $\mathsf{representative\_sample}(\fullinputset)$ does not
yield a truly representative sample?}\\
In this case, our approach (and any PBE approach for that matter) will
yield a program that does not match the user's intent. PBE systems provide
very weak guarantees in general: after all, input-output examples are an
inherently incomplete specification mechanism. The end-user is the
best judge of correctness, and ought to inspect the final program and
verify its correctness in critical applications. Despite this limitation,
we believe that our approach advances the state-of-the-art in PBE
synthesis.
}


\vspace{-4ex}
\subsection{Correctness}\label{sec:correctness}
We denote by $\specification{\sample}$, the specification
$\specification{\fullinputset}$, restricted to a representative sample
of $\fullinputset$, \ie, $\specification{\sample}(i)
= \specification{\fullinputset}(i)$ iff
$i \in \mathsf{representative\_sample(\fullinputset)}$ and $\bot$ otherwise.
We first note that the quantitative synthesis procedure is sound: the output
program satisfies the specification $\specification{\sample}$.
\begin{proposition}\label{prop:soundness}
    If the PBE synthesis procedure, \textsc{synthesize}, and the
    enumerative search procedure, \textsc{enumerative\_synth}, both return
    a program that satisfies the specification, then the output of the
    quantitative synthesis procedure also satisfies the specification.
    Furthermore, the number of input-output examples used by the
    quantitative synthesis procedure shown in \algoref{general_algorithm} is
    equal to the number used in the first phase by the underlying PBE
    engine.
\end{proposition}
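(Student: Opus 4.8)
The plan is to trace \algoref{general_algorithm} phase by phase, carrying along the invariant ``the current intermediate program satisfies ${\esmall}_{\equiv}$'' and, for the second statement, bookkeeping which calls actually touch the oracle $\specification{\fullinputset}$. For soundness I would start by noting that the hypothesis ``\textsc{synthesize} returns a program satisfying the specification'' rules out the early-exit branch at \lineref{early_fail} (that branch is taken exactly when the phase-1 call returns $\bot$), so $p_1 \ne \bot$ and, applying the hypothesis to the phase-1 invocation, $p_1$ satisfies $\specification{\fullinputset}$; in particular $p_1$ and $\specification{\fullinputset}$ agree on $\mathsf{representative\_sample}(\fullinputset)$. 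The crux is then the observation that, by its defining line~\ref{line:build_equiv_spec}, the set ${\esmall}_{\equiv} = \{(i, p_1(i)) : i \in \mathsf{representative\_sample}(\fullinputset)\}$ is precisely the input-output graph of $\specification{\sample}$. I would then propagate this forward: the hypothesis applied to the phase-2 \textsc{synthesize} call gives $p_2 \models {\esmall}_{\equiv}$; \textsc{enumerative\_synth}, which by Definition~\ref{def:local-qpbe} returns a term $p_3$ reachable from $p_2$ under $\RRules$ that satisfies its equivalence specification, gives $p_3 \models {\esmall}_{\equiv}$; and since $\translate$ is semantics-preserving by construction, $\translate(p_3)$ has exactly the input-output behavior of $p_3$. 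Chaining these, the value returned at \lineref{end} satisfies ${\esmall}_{\equiv}$, \ie $\specification{\sample}$, which is the first claim.

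For the example count, the relevant quantity (part~(2) of Definition~\ref{def:qpbe}) is the number of queries issued to the oracle $\specification{\fullinputset}$, and the plan is simply to show that phases~2 and~3 contribute none. I would point out that the only oracle queries made by \qsynth are those needed to assemble $\esmall$ for the phase-1 call to \textsc{synthesize} (with arguments $\esmall$, $\dsllang$, $\rankpbe$): line~\ref{line:build_equiv_spec} builds ${\esmall}_{\equiv}$ by executing the \emph{already-synthesized} program $p_1$ on the representative sample, issuing no oracle calls, and the subsequent \textsc{synthesize} and \textsc{enumerative\_synth} calls receive ${\esmall}_{\equiv}$ as a fixed argument; even if those calls perform internal counterexample-guided refinement, the counterexamples are checked against $p_1$ (equivalently, against ${\esmall}_{\equiv}$), never against $\specification{\fullinputset}$. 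Hence \qsynth's total oracle-query count equals the count incurred by the underlying PBE engine during the first phase.

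I expect the only non-bookkeeping step --- and the place to be careful --- to be the identification of ${\esmall}_{\equiv}$ with the graph of $\specification{\sample}$. This is exactly where I would need to read the hypothesis on the phase-1 call as ``$p_1$ is correct on $\fullinputset$'' (the correctness clause~(1) of Definition~\ref{def:qpbe}), rather than merely ``$p_1$ is consistent with the supplied examples $\esmall$'': without $p_1$'s correctness on the sample, ${\esmall}_{\equiv}$ would only be the graph of $p_1$ and the conclusion would weaken to ``the output is behaviorally equivalent to $p_1$ on the sample.'' Everything downstream of that identification is routine --- transporting the invariant ``satisfies ${\esmall}_{\equiv}$'' through two synthesis calls and one semantics-preserving translation --- and I would present it as such.
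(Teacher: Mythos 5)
Your proposal is correct and follows essentially the same route as the paper's own (much terser) proof sketch: soundness of the individual components, the observation that ${\esmall}_{\equiv}$ on line~\ref{line:build_equiv_spec} coincides with the intended specification restricted to the sample, and the fact that only the phase-1 call touches the oracle. Your explicit caveat about needing to read the hypothesis as ``$p_1$ satisfies $\specification{\fullinputset}$'' rather than mere consistency with $\esmall$ is a point the paper glosses over, but it does not change the argument.
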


\ignore{ 
The proof follows from the soundness assumption on the individual
components.  Also note that the equivalence specification $\esmall_\equiv$
computed on Line~\ref{line:build_equiv_spec} is equivalent to the intended
program's specification.  Furthermore, the only input-output examples
required are those in intent disambiguation phase (on Line~\ref{line:p1})
since the latter phases already know all the input-outputs there are to
know.
\endignore}

In general, the final output program of
\textsc{synthesize} is not guaranteed to be the minimum cost correct  program.
However, under some reasonable assumptions, we can still make certain
completeness claims.  A ranker $\rankc$ is {\em{monotonic}} if $\rankc(p_1)
< \rankc(p_2)$ implies $\rankc(p[p_1]_{\mathtt{pos}})
< \rankc(p[p_2]_{\mathtt{pos}})$ for all $p,p_1,p_2\in\subprog(L)$ and
$\mathtt{pos}\in\mathtt{Pos}(p)$.  The notation $p[p_1]$ denotes the new
subprogram obtained by replacing the subprogram $p|_{\mathtt{pos}}$ of $p$
at position $\mathtt{pos}$ by $p_1$. The set $\mathtt{Pos}(p)$ denotes the
set of all positions in the term $p$ (defined in the usual way). We use
$p[p_1]$ to denote $p[p_1]_{\mathtt{pos}}$ when the position $\mathtt{pos}$
is clear from the context.
\begin{proposition}\label{prop:completeness}
Suppose all function symbols in the signature $F$ of DSL $\dsllang$
have complete witness functions~\cite{polozov-15}.  If the language
generated by $\dsllang$ is finite, and if the ranking function
$\rankingscheme$ is monotonic, then, for any set of input-output examples
$\esmall$,
\textsc{synthesize}$(\esmall, \dsllang, \rankingscheme)$ returns the
highest ranked (with respect to $\rankingscheme$) correct program, and
consequently, Program~$p_2$ computed on Line~\ref{line:p2}
of \algoref{general_algorithm} is the highest ranked --- with respect to
$\rankc$ --- correct program in
$\dsllang$.
\end{proposition}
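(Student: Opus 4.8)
The plan is to split the statement into a \emph{completeness} half (\synthesize considers every program of $\dsllang$ consistent with the supplied examples) and an \emph{optimality} half (among those it returns one of maximal $\rankingscheme$-value), and then instantiate the general claim at \lineref{p2}. First I would recall the standard architecture of a \flashmeta-style PBE engine underlying \synthesize~\cite{polozov-15}: on input $(\esmall,\dsllang,\rankingscheme)$ it builds, bottom-up over the grammar of $\dsllang$, a version space algebra (VSA) $\mathcal{V}$ --- a DAG whose nodes are \emph{unions} and \emph{constructor applications} (joins) --- and then extracts from $\mathcal{V}$ the top program with respect to $\rankingscheme$. The completeness half is exactly the \flashmeta soundness/completeness theorem: when every operator in the signature $F$ of $\dsllang$ has a \emph{complete} witness function, the VSA $\mathcal{V}$ produced for the start symbol and the spec $\esmall$ represents \emph{precisely} the set of programs in $\prog(\dsllang)$ consistent with $\esmall$ --- no consistent program is dropped, and no inconsistent program is included. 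Finiteness of the language generated by $\dsllang$ guarantees $\mathcal{V}$ is a finite structure, so this set is finite and a $\rankingscheme$-maximal element exists.

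Next I would show the extracted program attains that maximum, by structural induction on $\mathcal{V}$ using monotonicity of $\rankingscheme$. At a union node the optimum is the best of the children's optima, which the extractor picks directly. At a join node $F(\mathcal{V}_1,\dots,\mathcal{V}_k)$, the extractor selects a $\rankingscheme$-optimal $p_i$ from each $\mathcal{V}_i$ (optimal by the induction hypothesis) and forms $F(p_1,\dots,p_k)$; for any competitor $F(q_1,\dots,q_k)$ with $q_i\in\mathcal{V}_i$, replacing $q_1$ by $p_1$, then $q_2$ by $p_2$, and so on, does not decrease the rank at any step --- this is precisely the monotonicity hypothesis applied at the position of the argument being replaced --- so $F(p_1,\dots,p_k)$ is $\rankingscheme$-maximal within that join node. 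Combining the two halves gives the general statement: \synthesize$(\esmall,\dsllang,\rankingscheme)$ returns the highest-ranked correct program.

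Finally I would instantiate at \lineref{p2} of \algoref{general_algorithm}, where the engine is called as \synthesize$(\esmall_\equiv,\dsllang,\rankc)$: the hypotheses carry over verbatim --- $\dsllang$ is finite, its operators have complete witness functions, and $\rankc$ is monotonic by assumption --- so the call returns the $\rankc$-highest program of $\prog(\dsllang)$ consistent with $\esmall_\equiv$. Since $\esmall_\equiv$ is a \emph{complete} input-output table over $\mathsf{representative\_sample}(\fullinputset)$, a program of $\dsllang$ is consistent with $\esmall_\equiv$ iff it meets the equivalence specification $\specification{\sample}$ on the sample; hence ``correct program in $\dsllang$'' and ``$\prog(\dsllang)$-program consistent with $\esmall_\equiv$'' denote the same set, and $p_2$ is the highest $\rankc$-ranked correct program in $\dsllang$.

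The main obstacle is the second step. The paper deliberately treats \synthesize as a black box, so the optimality argument must lean on the internal VSA/extraction structure of the assumed engine, and one has to be careful that monotonicity \emph{as stated} --- replacement of a single subterm at one position --- actually composes across the several argument positions of a constructor and across nested contexts in the DAG (a routine but fiddly induction), and that the ``no correct program lies outside $\mathcal{V}$'' guarantee genuinely needs \emph{completeness} of the witness functions, not merely their soundness. The finiteness assumption is what lets us speak of a maximum at all, sidestepping the need to reason about suprema in infinite VSAs.
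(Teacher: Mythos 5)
Your proposal is correct and is essentially the paper's own argument: the paper phrases it as top-down divide-and-conquer (guess the top symbol $f$, use its complete witness function to propagate the spec to subproblems so no consistent program is lost, terminate by finiteness, and observe that rank-based pruning never discards the optimum because monotonicity makes the best program's children the best programs of their respective subproblems), which is the deductive-search view of the same VSA-construction-plus-extraction mechanism you describe, with the three hypotheses playing identical roles. Your closing observations about composing single-position monotonicity across argument positions and about completeness (not mere soundness) of witness functions being essential are exactly the load-bearing points of the paper's sketch.
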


\ignore{ 
\begin{proof}
    Suppose a PBE engine is trying to synthesize a program generated by symbol $q$
    that meets specification $\specification{}$.
    If $q \rightarrow f(q_1,\ldots,q_n)$ is a production rule of the DSL $\dsllang$,
    then a PBE engine can guess that the top function symbol in the desired program is
    $f$ and then use the witness function for $f$ to generate new synthesis problems
    (new specifications) on symbols $q_1, \ldots, q_n$.
    This divide-and-conquer approach for synthesis will terminate since the DSL $\dsllang$
    is assumed to contain no loops (it generates a finite language).
    Furthermore, since the signature $F$ and states $Q$ are finite, there are only 
    finitely many choices for the top symbol $f$.  Since the witness functions are assumed
    to be complete, no solution is ever lost when diving a problem to subproblems.
    Finally, a PBE engine can prune solutions based on their ranks while
    searching (see Polozov, et. al.~\cite{polozov-15} for details).
    However, since the ranker is monotonic, pruning will never eliminate the best ranked
    candidate: if the best program generated by $q$ has $f$ at the top, then its first
    child will be the best program generated by $q_1$ due to monotonicity, and hence,
    other programs can be safely pruned at $q_1$.
\end{proof}
\endignore}

Finally, we can derive the following result about the correctness
of \algoref{general_algorithm} from Propositions~\ref{prop:soundness}
and~\ref{prop:completeness}.
\begin{theorem}\label{theorem:main}
    Under the assumptions of Proposition~\ref{prop:soundness} and Proposition~\ref{prop:completeness},
    and the assumptions about the inputs outlined in the meta-procedure in Section~\ref{subsec:meta},
    \algoref{general_algorithm} is terminating, and 
    the output $prog = \translate(p_2)$ of \algoref{general_algorithm} is a program in 
    language $\lang$ that satisfies the
    specification $\specification{}$, and has a cost that is minimum in the set
    \begin{eqnarray*}
     \{ c(p) \mid p = \translate(t), t \models \specification{}, t\in \mathsf{SearchedSpace}\} & \mathrm{where} &
        \\
        \mathsf{SearchedSpace} = \{ t \mid t\in \prog(\dsllang) \} \cup \{ t \mid t\in \prog(\dsllang^e), p_2 \rightarrow_{\RRules}^* t\} & &
    \end{eqnarray*}
\end{theorem}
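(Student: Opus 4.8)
The plan is to derive the three assertions of the theorem --- termination, satisfaction of $\specification{}$, and cost-minimality over $\mathsf{SearchedSpace}$ --- by composing the two already-established results, Proposition~\ref{prop:soundness} and Proposition~\ref{prop:completeness}, with three standing facts about the artifacts supplied to \algoref{general_algorithm}: that \translate is semantics-preserving; that $\rankc$ is, by design, a strictly decreasing reparametrization of $c\circ\translate$ on $\prog(\dsllang^e)$, so that higher $\rankc$-rank means strictly lower translated cost and equal $\rankc$-rank means equal translated cost; and that \textsc{enumerative\_synth} solves Local~\qpbe in the sense of Definition~\ref{def:local-qpbe}. (I read the statement's ``$prog=\translate(p_2)$'' as ``$prog=\translate(p_3)$'', matching the return value on Line~\ref{line:end}; the claim as written with $\translate(p_2)$ fails whenever a rewrite strictly lowers cost.) Because $\dsllang$ generates a finite language and the set of $\RRules$-descendants of $p_2$ is finite --- the latter by the meta-procedure's restriction on $\RRules$ --- the cost set in the statement is finite, so its ``minimum'' is well-defined.

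Termination I would verify phase by phase. Lines~\ref{line:p1} and~\ref{line:p2} halt because \textsc{synthesize} searches the finite space $\prog(\dsllang)$ (using the deductive, divide-and-conquer search that complete witness functions make possible --- the same mechanism underlying Proposition~\ref{prop:completeness}); Line~\ref{line:build_equiv_spec} halts because $\mathsf{representative\_sample}(\fullinputset)$ is finite and each $p_1(i)$ evaluates under executable semantics; and \textsc{enumerative\_synth} on Line~\ref{line:p3} halts because it enumerates only the finite set of $\RRules$-descendants of $p_2$. For correctness: if phase~1 fails we return $\bot$ on Line~\ref{line:early_fail} and are done; otherwise $p_1 \models \esmall$, and since every pair in $\esmall_{\equiv}$ has the form $(i,p_1(i))$, $p_1$ is consistent with $\esmall_{\equiv}$, so phase~2 does not fail and returns $p_2 \models \esmall_{\equiv}$. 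As $\dsllang$ is a sub-DSL of $\dsllang^e$, $p_2 \in \prog(\dsllang^e)$, and $p_2$ --- reachable from itself in zero rewrites --- witnesses that \textsc{enumerative\_synth} returns some $p_3 \models \esmall_{\equiv}$. Proposition~\ref{prop:soundness} together with semantics-preservation of \translate then gives $\translate(p_3) \in \lang$ with $\translate(p_3) \models \esmall_{\equiv}$; under the input assumptions of Section~\ref{subsec:meta} --- phase~1 pins down the intended behavior on the sample --- $\esmall_{\equiv}$ is exactly $\specification{\sample} = \specification{}$ of Section~\ref{sec:correctness}, so $\translate(p_3) \models \specification{}$.

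For cost-minimality I would fix an arbitrary $t \in \mathsf{SearchedSpace}$ with $t \models \specification{}$ and show $c(\translate(p_3)) \le c(\translate(t))$. If $t \in \prog(\dsllang)$, then $t$ is a correct $\dsllang$-program consistent with $\esmall_{\equiv}$, so by Proposition~\ref{prop:completeness} (whose hypotheses --- $\dsllang$ finite, complete witness functions, $\rankc$ monotonic --- are assumed) $p_2$ is the highest-$\rankc$-ranked such program; hence $\rankc(p_2) \ge \rankc(t)$ and $c(\translate(p_2)) \le c(\translate(t))$. If instead $t \in \prog(\dsllang^e)$ with $p_2 \rightarrow_{\RRules}^* t$, then $t$ lies in the space searched by \textsc{enumerative\_synth} and satisfies $\esmall_{\equiv}$, so the Local~\qpbe guarantee gives $\rankc(p_3) \ge \rankc(t)$ and $c(\translate(p_3)) \le c(\translate(t))$. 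Instantiating the latter with $t := p_2$ (which is reachable from $p_2$ and correct) yields $c(\translate(p_3)) \le c(\translate(p_2))$, and composing with the first case covers every $t$. Since $p_3 \in \mathsf{SearchedSpace}$ and $\translate(p_3) \models \specification{}$, this lower bound is attained, so $c(\translate(p_3))$ is the minimum of the displayed set.

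The step I expect to be the genuine obstacle is the passage from ``highest $\rankc$-ranked correct program'' --- what Proposition~\ref{prop:completeness} supplies --- to ``cheapest translated program'' --- what the statement demands. The contract designed into $\rankc$ is only the one-way implication $\rankc(p) > \rankc(p') \Rightarrow c(\translate(p)) < c(\translate(p'))$, which says nothing when two programs tie in $\rankc$-rank, so a priori the ``highest ranked'' program could be a tied one of higher cost. I would close this by recording explicitly, as a hypothesis on the artifacts, that $\rankc(p) = \rankc(p') \Rightarrow c(\translate(p)) = c(\translate(p'))$ --- automatic for the $\rankc$ one builds from $c$ --- so that $\rankc$-order and reverse-cost-order coincide and ``highest ranked'' literally means ``cheapest.'' Two lesser points to nail down are that termination of the local phase rests on the deliberate restriction of $\RRules$ (which should appear as a hypothesis, not be taken for granted), and that identifying $\specification{}$ with the sampled $\esmall_{\equiv}$ inherits the usual PBE soundness caveat already flagged in the Remark.
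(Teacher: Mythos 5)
Your proposal is correct and follows essentially the route the paper intends: the paper offers no explicit proof of Theorem~\ref{theorem:main} beyond the remark that it is derived by composing Proposition~\ref{prop:soundness} (for satisfaction of $\specification{}$) with Proposition~\ref{prop:completeness} (for optimality of $p_2$ over $\prog(\dsllang)$) and the Local~\qpbe guarantee of Definition~\ref{def:local-qpbe} (for optimality over the $\RRules$-reachable set), which is exactly the decomposition you carry out. The two issues you flag are genuine defects of the statement rather than of your argument --- the returned program is $\translate(p_3)$, not $\translate(p_2)$ as written, and the one-directional contract $\rankc(p) > \rankc(p') \Rightarrow c(\translate(p)) < c(\translate(p'))$ does not by itself exclude a rank-tied program of strictly lower cost, so the extra tie-breaking hypothesis you propose (or strictness/injectivity of $\rankc$ on cost classes) is needed to close the argument.
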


We finally note an efficient way to perform \textsc{enumerative\_search} in the case when the rewrite
rules $RR$ satisfy certain conditions.
\begin{proposition}\label{prop:convergent}
    If every $p\in \lang$ has a preimage $p^-\in\prog(\dsllang^e)$ s.t. $\translate(p^-) = p$,
    and if the range of $\rankingscheme_2$ contains only fixed precision rationals, then
    the procedure \textsc{enumerative\_search} is terminating. Furthermore, 
    if the subset of rewrite rules $\RRules$ that are rank increasing (cost decreasing) is confluent, 
    then the procedure can be performed efficiently in time $O(k |\RRules|)$, where $k$ is the number
    of successful rewrites applied by the procedure and $|\RRules|$ is the size of rewrite system.
\end{proposition}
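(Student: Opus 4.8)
The plan is to prove the two assertions in turn --- termination of \textsc{enumerative\_search}, and then its $O(k\,|\RRules|)$ running time under the confluence hypothesis --- using the design constraint on $\rankc$ from Section~\ref{subsec:meta} (namely $\rankc(p) > \rankc(p') \Rightarrow c(\translate(p)) < c(\translate(p'))$) together with the conditional soundness of the rules in $\RRules$.

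First I would pin down the operational behaviour of \textsc{enumerative\_search}: it only ever commits to a successor term $t'$ that is not pruned for incorrectness (so $t' \models \esmall_{\equiv}$) and not pruned for rank (so $\rankc(t')$ strictly exceeds the $\rankc$-value of every term already on the search path). This gives the invariant that along any explored branch $p_2 \rightarrow q_1 \rightarrow q_2 \rightarrow \cdots$ the ranks strictly increase, whence --- by the $\rankc$-constraint --- the costs $c(\translate(p_2)) > c(\translate(q_1)) > \cdots \ge 0$ strictly decrease and are bounded below, and each $q_i \in \prog(\dsllang^e)$ is obtained from $p_2$ by $i$ local rewrite applications. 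To conclude termination I would show this improving chain is finite: since the range of $\rankc$ consists of fixed-precision rationals, consecutive distinct $\rankc$-values differ by a fixed $\varepsilon > 0$, so a chain of length $n$ forces $\rankc(q_n) \ge \rankc(p_2) + n\varepsilon$; it then remains to bound $\rankc$ from above along the chain, and this is where the preimage hypothesis enters --- surjectivity of $\translate$ lets us identify the ranks attainable by $\dsllang^e$-programs with translated cost below the finite number $c(\translate(p_2))$ with the ranks witnessed inside $\lang$ below that threshold, a set that is discrete (fixed precision) and bounded (via $c \ge 0$ and the cost-reflecting property), hence has a maximum $B$. The chain then has length $\le (B - \rankc(p_2))/\varepsilon =: k_{\max}$, so the set of terms reachable from $p_2$ by at most $k_{\max}$ applications of finitely many local rules is finite, and a search that never re-commits to a strictly dominated term halts. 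I expect this to be the main obstacle: making ``$\rankc$ is bounded along any improving chain'' precise while avoiding a circularity between bounding the chain length and bounding the reachable set --- the clean order is to derive $k_{\max}$ from non-negativity and fixed precision of the cost-reflecting ranks first, and only then infer finiteness of the reachable set.

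For the running time, the argument above already shows that the rank-increasing (cost-decreasing) fragment of $\rightarrow_{\RRules}$ is terminating; assuming it is additionally confluent makes it convergent, so $p_2$ has a unique normal form $p_3$ under it. Hence \textsc{enumerative\_search} need not store or backtrack over the tree of reachable terms: it can repeatedly scan the $|\RRules|$ rules and fire any one that matches a subterm, is sound (its precondition $\psi_\rho$ holds of the fixed inputs of $\esmall_{\equiv}$, equivalently the rewritten term still satisfies $\esmall_{\equiv}$, which is checked by execution), and is rank-increasing, halting when none applies; confluence makes the resulting term independent of the firing order and equal to $p_3$, so this greedy normalization is complete and optimal for the local \qpbe problem of Definition~\ref{def:local-qpbe}. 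Charging unit cost per rule match/soundness/rank test, each of the $k$ successful rewrites is preceded by at most one $O(|\RRules|)$ scan, plus one final fruitless scan, giving $O(k\,|\RRules|)$. A secondary subtlety to address here is that restricting a confluent rewrite system to a subset of its rules can destroy confluence; so one must either assume that $\RRules$ is designed so that its \emph{admissible} rank-increasing fragment is confluent, or observe that admissibility of a rule depends only on the fixed inputs of $\esmall_{\equiv}$ (hence is path-independent), which is what lets the confluence diagrams be replayed within the admissible fragment.
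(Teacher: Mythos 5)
Your argument is essentially the paper's own: termination follows because each committed rewrite raises the fixed-precision rank by at least a constant while the ranks are bounded above via the preimage (surjectivity) hypothesis, and confluence of the terminating rank-increasing fragment yields convergence and unique normal forms, hence an order-independent greedy scan costing $O(k\,|\RRules|)$. The only substantive addition is your closing caveat about whether confluence survives restriction to the admissible (sound and rank-increasing) subsystem, a point the paper's sketch glosses over and which your path-independence observation resolves.
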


\ignore{ 
\begin{proof}
    The rule applications are terminating since they increase the rank by at least a fixed constant
    (given by the precision), and ranks are upper-bounded by the rank of the preimage of the 
    lowest cost program.  Since rank-increasing rewrite rules are terminating, if they are also
    confluent, then the rules are convergent and define unique normal forms that are always reachable.
    Hence, the order of application of the rules does not matter.
\end{proof}
\endignore}

\section{Experimental Evaluation}
\label{section:experimental_evaluation}
\newcommand{\numBenchmarks}{701}
\newcommand{\numPerfPhaseOneTwoThreeIncrease}{610}
\newcommand{\numPerfPhaseOneTwoThreeDecrease}{91}
\newcommand{\numPerfPhaseOneTwoThreeOverPhaseOneTwo}{537}
\newcommand{\numPerfPhaseOneTwoThreeOverPhaseOneThree}{481}

\newcommand{\numReadPhaseOneTwoThreeIncrease}{493}
\newcommand{\numReadPhaseOneTwoThreeEqual}{195}
\newcommand{\numReadPhaseOneTwoThreeDecrease}{13}
\newcommand{\numReadPhaseOneTwoThreeOverPhaseOneTwo}{135}
\newcommand{\numReadPhaseOneTwoThreeOverPhaseOneThree}{395}

\ignore{
In this section, we discuss the results of experimentally evaluating the
proposed methodology.
We have instantiated Algorithm~\ref{algorithm:general_algorithm} for two quantitative
objectives --- program performance and program size --- using the
\flashfill PBE system as a starting point. To evaluate the effectiveness of
the \phasesoneandtwo, we performed an ablation study, the results of which
are presented in \subsecref{ablation_results}. In
\subsecref{overhead_tradeoff}, we discuss the conditions on the volume of
data required to justify the additional overhead presented by
\phasesoneandtwo. Finally, in \subsecref{case_studies}, we drill down into
the details of how \phasesoneandtwo behave on a few interesting benchmarks.

\subsection{Results of an Ablation Study}
\label{subsection:ablation_results}
To evaluate the effectiveness of the \phaseone and the \phasetwo, we
conducted the following ablation study. First, we evaluate the
effectiveness of the \phaseone alone, for both the quantitative objectives
that we consider --- program size and performance. We then evaluate the
effectiveness of the \phasetwo alone. Finally, we demonstrate that applying
both the \phasesoneandtwo improves the quantitative metric in question more
than applying just the \phaseone or the \phasetwo alone.

\subsection{Justification for Overhead of Optimization and Rewriting.}
\label{subsection:overhead_tradeoff}
Applying the \phasesoneandtwo incurs some additional overhead. We now
tackle the question of whether these overheads are worth it. To this end,
we consider the number of inputs required such that the savings in
optimized program execution time over these inputs are greater than or
equal to the additional overhead incurred in finding a more efficient
program.
}

We evaluated our proposed methodology with 
respect to two quantitative objectives --- program performance and program
size.
The base PBE synthesizer for our evaluation is the \flashfill PBE system. 
Our benchmarks consist of $\numBenchmarks$ text transformation tasks derived
from both academic and industry sources.
The target programs for these tasks include string, sub-string, regex operations,
as well as operations on number and date parsing and formatting.

Our experiments are designed to answer the following research questions.
\begin{compactitem}
    \item[\textbf{RQ1}:] 
        How effective is Algorithm~\ref{algorithm:general_algorithm} at optimizing the two
        objectives?
        Informally, what is the improvement seen in program size and program
        performance on using Algorithm~\ref{algorithm:general_algorithm} over standard PBE?
    \item[\textbf{RQ2}:] 
        What is the importance of each phase of Algorithm~\ref{algorithm:general_algorithm}?
        Through ablation studies, we examine whether any of the phases of
        Algorithm~\ref{algorithm:general_algorithm} can be skipped, while obtaining the same
        performance and size gains.
    \item[\textbf{RQ3}:]
        What is the overhead of using Algorithm~\ref{algorithm:general_algorithm}
        over standard PBE?
\end{compactitem}

\paragraph{Methodology and Measurements.}
Our experimental set-up is presented in Figure~\ref{fig:methodology}.  For
each benchmark, the intent of the user (in the form of examples) is fed
into \flashfill (the \emph{intent disambiguation} phase, which we refer to
as Phase $1$) to produce the program $p_1$.  As per the algorithm, $p_1$ is
processed through the \emph{global search} and the \emph{local search}
phases (henceforth referred to as Phases $2$ and $3$, respectively), in
sequence, to obtain $p_{12}$ and $p_{123}$.  The Phase $2$ ranking models
for both objectives were hand written, with less than $1$ person-day of
development put into each of them.  For the performance and
readability objectives, we generate exactly $1$ and $5$ programs
respectively in Phase $2$.

For $p_{1}$ and $p_{123}$, the value of the objective $\mathbf{o_1}$ (baseline)
and $\mathbf{o_{123}}$ (QPBE) was measured.
Further, the time to run Phase $1$ was recorded as the \emph{standard PBE
time} $\mathbf{t_{pbe}}$, and the time to run Phase $2$ and Phase $3$ together 
was recorded as \emph{optimization time} $\mathbf{t_{opt}}$ --- the total running
time is $t_{pbe} + t_{opt}$.

In addition, we perform ablation studies by skipping Phase $1$, Phase $2$, and 
Phase $3$.
To measure the effect of skipping Phase $1$, we measure the number of examples $\mathbf{e_1}$
and $\mathbf{e_2}$ required to converge to the user intent by Phase $1$ and Phase $2$,
respectively.
When skipping Phase $2$ and Phase $3$, we measure the objective function on the
program $p_{13}$ (Phase $3$ run directly on $p_1$) and $p_{12}$ as 
$\mathbf{o_{13}}$ and $\mathbf{o_{12}}$, respectively.

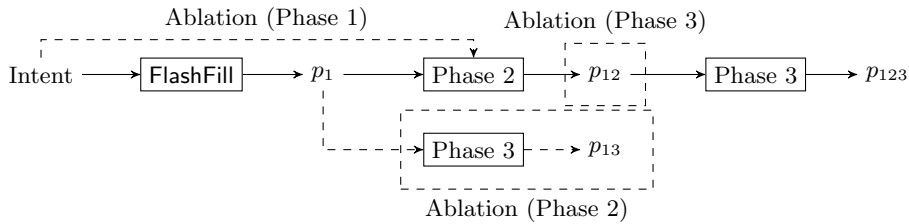
\begin{figure}
    \vspace{-4ex}
    \begin{tikzpicture}[scale=0.8]
        \node  (n0) {Intent};
        \node[draw, rectangle, right of=n0, xshift=1cm] (n1) {\flashfill};
        \node[rectangle, right of=n1, xshift=0.75cm] (p1) {$p_1$};
        \node[draw, rectangle, right of=p1, xshift=1cm] (n2) {Phase $2$};
        \node[rectangle, right of=n2, xshift=0.75cm] (p12) {$p_{12}$};
        \node[draw, rectangle, right of=p12, xshift=1cm] (n3) {Phase $3$};
        \node[rectangle, right of=n3, xshift=0.75cm] (p123) {$p_{123}$};

        \node[draw, rectangle, below of=n2] (n4) {Phase $3$};
        \node[rectangle, right of=n4, xshift=0.75cm] (p13) {$p_{13}$};

        \node[fit=(n4)(p13), inner sep=0.3cm, draw, dashed, label=below:{Ablation (Phase $2$)}] (f1) {} ;
        \node[fit=(p12), inner sep=0.2cm, draw, dashed, label=above:{Ablation (Phase $3$)}] (f2) {} ;

        \path[draw, ->] (n0) -- (n1) ;
        \path[draw, ->] (n1) -- (p1) ;
        \path[draw, ->] (p1) -- (n2) ;
        \path[draw, ->] (n2) -- (p12) ;
        \path[draw, ->] (p12) -- (n3) ;
        \path[draw, ->] (n3) -- (p123) ;

        \path[draw, dashed, ->] (p1.south) |- (n4.west) ;
        \path[draw, dashed, ->] (n4) -- (p13) ;

        \path[draw, dashed, ->] (n0.north) -- +(0, 0.3cm) node[above, xshift=3cm] {Ablation (Phase 1)} -|  (n2.north) ;
    \end{tikzpicture}
    \vspace{-5ex}
    \caption{Experimental methodology}
    \label{fig:methodology}
    \vspace{-8ex}
\end{figure}

\begin{figure}
    \vspace{-2ex}
    \centering
    \hspace*{-14mm}\begin{subfigure}[b]{0.43\textwidth}
        \includegraphics[scale=0.85]{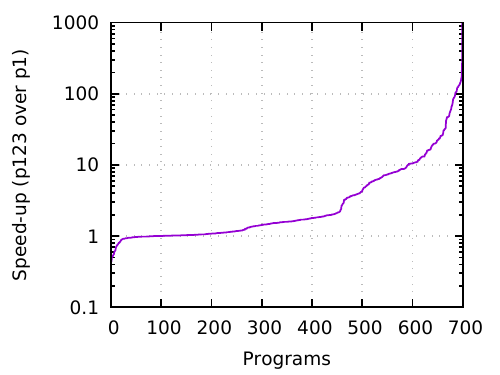}
        \caption{Algo.~\ref{algorithm:general_algorithm} effectiveness}
        \label{fig:perf:o123_vs_o1}
    \end{subfigure}
    \hspace*{-12mm}\begin{subfigure}[b]{0.43\textwidth}
        \includegraphics[scale=0.85]{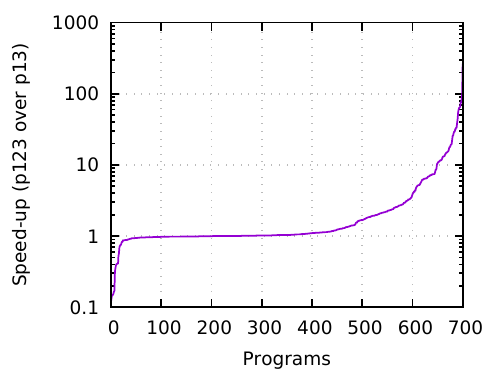}
        \caption{Phase $2$ ablation}
        \label{fig:perf:o123_vs_o13}
    \end{subfigure}
    \hspace*{-12mm}\begin{subfigure}[b]{0.3\textwidth}
        \includegraphics[scale=0.85]{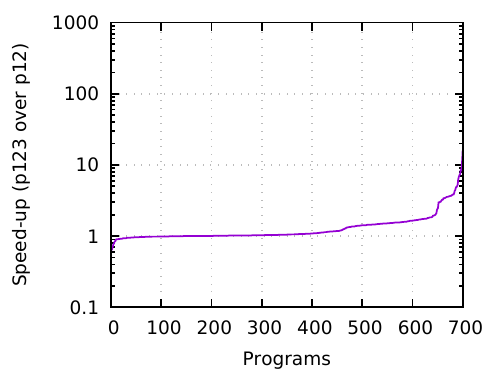}
        \caption{Phase $3$ ablation}
        \label{fig:perf:o123_vs_o12}
    \end{subfigure}
    \vspace{-2ex}
    \caption{
        Effectiveness of Algorithm~\ref{algorithm:general_algorithm} and individual phases for performance
     }
     \label{fig:perf}
    \vspace{-4ex}
\end{figure}

\begin{figure}
    \vspace{-4ex}
    \centering
    \hspace*{-14mm}\begin{subfigure}[b]{0.43\textwidth}
        \includegraphics[scale=0.85]{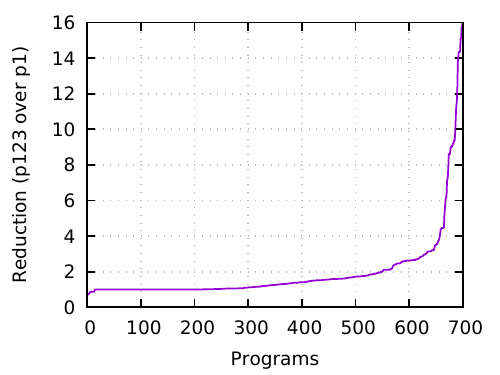}
        \caption{Algo.~\ref{algorithm:general_algorithm} effectiveness}
        \label{fig:read:o123_vs_o1}
    \end{subfigure}
    \hspace*{-12mm}\begin{subfigure}[b]{0.43\textwidth}
        \includegraphics[scale=0.85]{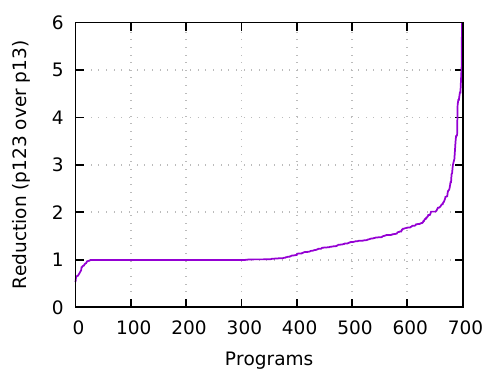}
        \caption{Phase $2$ ablation}
        \label{fig:read:o123_vs_o13}
    \end{subfigure}
    \hspace*{-12mm}\begin{subfigure}[b]{0.3\textwidth}
        \includegraphics[scale=0.85]{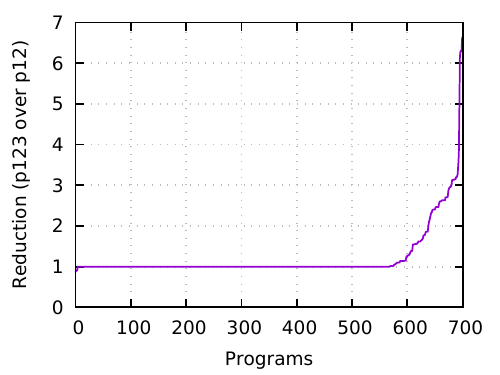}
        \caption{Phase $3$ ablation}
        \label{fig:read:o123_vs_o12}
    \end{subfigure}
        \vspace{-2ex}
    \caption{
        Effectiveness of Algorithm~\ref{algorithm:general_algorithm} for program size
     }
     \label{fig:read}
    \vspace{-4ex}
\end{figure}

\paragraph{RQ1: Effectiveness of Algorithm~\ref{algorithm:general_algorithm}}
Qualitatively, the value of program size and performance objectives
improve over standard PBE (i.e., $o_{123} > o_1$) for $\numReadPhaseOneTwoThreeIncrease$ and
$\numPerfPhaseOneTwoThreeIncrease$ of $\numBenchmarks$ benchmarks, respectively.
The overall profile of the improvement is plotted in Figure~\ref{fig:perf:o123_vs_o1} and 
Figure~\ref{fig:read:o123_vs_o1}, respectively.

\boldpara{Performance.}
Of the $101$ benchmarks where Algorithm~\ref{algorithm:general_algorithm}
did not improve performance, only $19$ benchmarks had throughput decreased by
$15\%$ or more, and only $2$ by $50\%$ or more.
Most cases where performance decreases are due to the Phase $2$ performance model 
being inaccurate with respect to real world performance.
%
Of the $\numPerfPhaseOneTwoThreeIncrease$ benchmarks where 
Algorithm~\ref{algorithm:general_algorithm} improves performance,
the most spectacular improvement was a benchmark with $3400X$ improvement due to the
replacement of multiple regex matching operation on long string inputs with a constant
index and a string find operation.
The remaining speed-ups were less than $300X$.
For performance, the median and average (geometric mean) improvement over 
baseline were $1.6X$ and $2.7X$, respectively.

\boldpara{Program size.}
For program size, Algorithm~\ref{algorithm:general_algorithm} showed improvement in 
$\numReadPhaseOneTwoThreeIncrease$ benchmarks and degradation in 
$\numReadPhaseOneTwoThreeDecrease$ benchmarks; $\numReadPhaseOneTwoThreeEqual$ benchmarks
had the same program size with both Algorithm~\ref{algorithm:general_algorithm} and \flashfill.
Again, most of the degradation cases were due to the Phase $2$ ranking model.
For program size, the median and average improvement over baseline were $1.26 X$ and
$1.53 X$, respectively, with the maximum improvement being $20 X$.

\paragraph{RQ2: Importance of each phase.}
For Phase 1, the motivation presented was that standard PBE systems are
optimized to converge to user's intent with as few examples as possible; while the same
is not possible for synthesizers tuned for other objectives.  To validate
this intuition, we compare the number of examples $e_1$ and $e_2$ required
by Phase $1$ and Phase $2$ running independently.  For performance (resp.
program size) objectives, we found that in $353$ (resp. $364$) of the
$\numBenchmarks$ cases, the number of examples required by Phase $2$ is
greater than by Phase $1$.  This justifies the use of a separate Phase $1$
to narrow down user intent.

To quantify the need for both Phases $2$ and $3$, we measure the effect of
skipping either of these phases on the objective values of the optimized
programs, i.e., we compare $o_{123}$ to $o_{13}$ (skip Phase $2$) and
$o_{12}$ with $o_{123}$ (skip Phase $3$).
Figures~\ref{fig:read:o123_vs_o13} and~\ref{fig:read:o123_vs_o12} show
ablation results for program size, and Figures~\ref{fig:perf:o123_vs_o13}
and~\ref{fig:perf:o123_vs_o12} do the same for performance.

For Phase $2$:
\begin{inparaenum}[(a)]
    \item For performance, we found that $o_{123}$ was better than $o_{13}$ in 
    $\numPerfPhaseOneTwoThreeOverPhaseOneThree$ cases, with the median and average speed-up 
    being $1.03X$ and $1.6X$, respectively.
    \item For program size, $o_{123}$ was better than $o_{13}$ in $\numReadPhaseOneTwoThreeOverPhaseOneThree$
    cases, with the median and average reduction in sizes being $1.1 X$ and $1.23 X$ respectively.
\end{inparaenum}

For Phase $3$:
\begin{inparaenum}[(a)]
    \item For performance, Phase $3$ led to an improvement in 
    $\numPerfPhaseOneTwoThreeOverPhaseOneTwo$ cases, with median
        and average speed-ups being $1.2X$ and $1.05X$, respectively.
    \item For program size, Phase $3$ led to an improvement in $\numReadPhaseOneTwoThreeOverPhaseOneTwo$
        cases, with median and average reduction in size being $1 X$ and $1.14 X$, respectively.
\end{inparaenum}

\begin{wrapfigure}[10]{R}{0.4\textwidth}
    \vspace{-1ex}
    \includegraphics[scale=0.97]{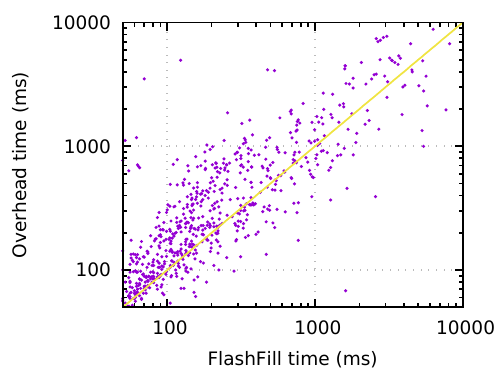}
    \vspace{-5ex}
    \caption{Algorithm~\ref{algorithm:general_algorithm} overhead}
    \label{fig:overhead}
\end{wrapfigure}

\paragraph{RQ3: Overhead of using Algorithm~\ref{algorithm:general_algorithm}.}
The overhead 
(i.e., Phase $2$ and Phase $3$ time $t_{opt}$) is plotted in
Fig.~\ref{fig:overhead} against baseline \flashfill time $t_{pbe}$.
As it can be seen, the overhead is generally close to the PBE time, with 
average and median overheads being $1.39X$ and $1.27X$.
While these numbers may seem large, the absolute overheads are quite small due to
the baseline \flashfill being quite fast.
In fact, the absolute median and average (arithmetic mean) overhead is $0.25$ and $1.23$ 
seconds, respectively, showing that efficiency of Algorithm~\ref{algorithm:general_algorithm}
is still well within the realm of responsive user interfaces where PBE is typically used.

\ignore{
Firstly, Algorithm~\ref{algorithm:general_algorithm} introduces an average and median overhead
($t_{opt} / t_{pbe}$) of $XXX$ and $YYY$ in the case of performance, and $XXX$ and $YYY$
in the case of program size (See Figure~\ref{fig:xxx}).
Further, in the case of performance, we quantify the overhead of using Algorithm~\ref{algorithm:general_algorithm}
over standard PBE in the context of the data preparation workflow shown in Figure~\ref{fig:xxx}.
Informally, a data-set $D$ justifies using Algorithm~\ref{algorithm:general_algorithm} over standard PBE
if, the time taken to process it using Algorithm~\ref{algorithm:general_algorithm} $t_{pbe} + t_{opt} +
\frac{D}{o_{123}}$ is less that the time taken to process it using standard PBE $t_{pbe}
+ \frac{D}{o_1}$.
We compute what the size of the full data-set $D$ required to justify the overhead as 
follows:
\[
    D \geq \frac{t_{opt} \cdot o_1 \cdot o_{123}}{ o_{123} - o_{1} }
\]
Figure~\ref{fig:xxx} plots the data-set size that justifies Algorithm~\ref{algorithm:general_algorithm} 
overhead.
As can be seen from the figure, the overhead is justified for typical industrial scale
data-sets.
}

The experiments provide clear evidence that each of the three phases of Algorithm~\ref{algorithm:general_algorithm}
contributes significantly in the process of solving the \qpbe problem, while incurring 
acceptable overhead altogether.


\vspace{-2ex}
\section{Related Work}
\label{section:related_work}
\vspace{-1ex}
Smith et al.~\cite{smith-19} have argued and shown experimental evidence for 
the significance of considering programs only in {\em normal} form
(i.e., those not amenable to a further rewrite using a set of 
rewrite rules), thus pruning the search space and making it more tractable.
We leverage the same observation by doing the program search over a reduced 
grammar optimized for PBE engine efficiency, but use rewrite rules in the
opposite direction to lift the result to richer target languages.

\paragraph{Quantitative Synthesis}
Synthesis with quantitative objectives has been considered in several settings.
Bloem et.al.~\cite{reactive:cav09} discuss the problem of synthesizing optimal
solutions in the context of reactive systems, where the objectives are specified
using weighted automata.
Hu and D'Antoni~\cite{qsygus:cav18} proposed a quantitative SyGuS framework that
allows for expressing quantitative constraints over the desired solution from a
weighted grammar.
Their approach is to reduce the problem to a standard SyGuS problem over a richer
non-weighted grammar that explicitly tracks weights using new non-terminals.
In contrast, we propose a quantitative PBE framework that leverages the underlying 
PBE framework's ranking engine to cater to the quantitative objective, and uses 
rewrite rules to find better solutions outside the grammar of the PBE.

Chaudhuri et.al.~\cite{smoothing:popl14} address the problem of synthesizing parameters
in a program to satisfy given boolean and quantitative objectives. Their smoothed proof
search technique reduces the problem to a series of unconstrained smooth optimization
problems that are solved using numerical methods. 
D'Antoni et.al.~\cite{qlose:cav16} address the problem of synthesizing program repairs that
meet a quantitative objective of being close to the original program in terms syntax or
execution traces. Their technique is to encode the quantitative objective as a
optimization constraint for the underlying Sketch synthesizer, which then uses an
incremental search methodology. 
In contrast, we deal with synthesizing full programs and leverage symbolic 
deductive techniques to meet the quantitative objective.

\paragraph{Synthesis of Efficient Programs}
One of the quantitative objectives that we discussed and experimented with is 
that of performance. 
Cerny et.al.~\cite{quantitative:cav11} and Vechev et.al.~\cite{concurrency:popl10} 
studied the problem of completing a partial program, by transforming and adding
synchronization constructs so that worst-case or average case performance is
optimized.
While the goal here is similar to ours, that of catering to performance criterion
during synthesis, the starting point (partial program instead of examples)
and application (concurrency) are very different.

Another classical application of program synthesis in the context of performance
has been in {\em superoptimization}, which is the task of synthesizing an optimal
sequence of instructions that is semantically equivalent to a given piece of code~\cite{superoptimization:asplos87}. 
Given the undecidability of checking semantic equivalence, superoptimization has
been restricted to optimizing straight-line code fragments~\cite{peephole:osdi08,superoptimization:asplos16}
or more generally, loop bodies~\cite{simd:ppopp13}. 
In contrast, we are able to deal with sophisticated string manipulating code 
involving complex operators by relaxing the semantic equivalence criterion to that
of equivalence under a given precondition.
Further, in \qpbe, we solve for a double optimization criterion, that of both minimizing
user interaction, as well as performance.

Various approaches to superoptimization include enumeration of instruction 
sequences~\cite{superoptimization:asplos87,superoptimization:asplos16},
reduction to SAT/SMT constraint solving~\cite{loopfree:pldi11}, and searching over constrained 
spaces of equality-preserving transformations~\cite{denali:pldi02,peephole:asplos06}.
Our rewrite rules can be seen as a relaxed (modulo inputs) version of 
semantics-preserving transformations, which can operate on rich data types
including strings and regular expressions.
Another key difference however is that we rely on a global search algorithm first,
in addition to the local rewrite rules.




Stochastic superoptimization~\cite{stochastic:cacm16} uses a two-phase approach
where the first phase finds algorithmically distinct solutions and the second phase
finds efficient implementations.
While there are similarities to our approach of finding ``DSL distinct" solutions,
followed by different ``target language" implementations, the setting and the techniques
involved are quite different.

\section{Conclusion}
\label{section:conclusion}
There is growing interest in the area of PBE, thanks to technical advances and relevant applications. 
Advanced search algorithms have enabled synthesis of programs in real time, 
while new ranking techniques have enabled synthesis from a small number of examples. 
%
However, to broaden the reach of PBE technologies, 
we need to provide the users with more control on the nature of the synthesized
program.
These programs need to be in an appropriate target language to match
a user's workflow, need to be concise/readable for easy modifiability and maintenance,
and need to be efficient to avoid computational costs on big data.
We capture these real-world requirements using the \qpbe problem. 

Our solution approach is modular and builds over advances in existing PBE systems. 
We have implemented our technique on top of an production-quality PBE system
for the domain of data transformations. 
Our experimental results show significant benefits on top of an existing
\flashfill implementation. 

{
\setlength{\itemsep}{2pt}

}

\appendix

\section{Appendix to Section~\ref{section:motivating_example}}
\label{sec:app_2}

\begin{figure}
\fontsize{8}{9}\selectfont
\begin{lstlisting}
def transform(x):
  date_string = x[15:25]
  input_date_format = r"(?<month>\d{2})/(?<day>\d{2})/(?<year>\d{4})"
  dt_obj = parse_datetime(date_string, input_date_format)
  return dt_obj.strftime("%B ") + "{0:01d}".format(dt_obj.day) +
         dt_obj.strftime(", %Y")
\end{lstlisting}
\caption{Program $P_{12}$ produced by Phase 2 on the \figref{motivating_example}.}
\label{figure:phasetwo_program}
\end{figure}

\ignore{

\section{Rewrite Rules for Phase $3$}
We provide some examples of rewrite rules used in Phase $3$ in our experiments
where we are synthesizing small Python programs from input-output examples.

The most common type of rewrite rule takes the form
$r_1 \rightarrow r_2$ where $r_1$ and $r_2$ are regular expressions.
If we set $r_1$ to
\verb|"[0-9]+(\,[0-9]{3})*(\.[0-9]+)?"|
and $r_2$ to
\verb|"\d+"|, we get an example rewrite rule.
This rule replaces a complex pattern that matches numbers with commas and decimal point
with a pattern that matches a sequence of digits.
Clearly, in general, replacing the first regular expression by second one will change the semantics
of a program. However, if all numbers in all the inputs are of the form 
\verb|"\d+"|, then the replacement will preserve behavioral equivalence.
We can get other rules by changing $r_2$ to different commonly-used number patterns.

If we set $r_1$ to the complex regex that matches a date (in any format), and set
$r_2$ to a simple regex that matches dates of a specific format, we get another 
example of another rewrite rule that is unsound in general, but sound
under certain conditions.

A different example of a rewrite rules is
$r_1\cdot r_2 \rightarrow r_1r_2$ where $r_1$ and $r_2$ are any regular expressions,
$\cdot$ is a special DSL operator, and juxtaposition just denotes string concatenation.
The DSL operator $\cdot$ is not semantically equivalent to string concatenation, but it
is equivalent when $r_1$ and $r_2$ satisfy a non-overlap condition. The non-overlap condition 
holds true when a match for $r_1$ can not overlap with a match for $r_2$ in any string (while
starting before). This condition is hard to check statically, but one can always use this
rewrite rule and check if preserves behavioral equivalence in Phase $3$.

Other examples of rewrite rules include rules that replace a datetime format used by the
DSL for parsing datetime objects by its closest match datetime format in the target language.
The same can be said for number formats.

\section{Appendix to Section~\ref{sec:correctness}}
\begin{proposition}[Restating Proposition~\ref{prop:soundness}]
    If the PBE synthesis procedure, \textsc{synthesize}, and the
    enumerative search procedure, \textsc{enumerative\_synth}, both return
    a program that satisfies the specification, then the output of the
    quantitative synthesis procedure also satisfies the specification.
    Furthermore, the number of input-output examples used by the
    quantitative synthesis procedure shown in \algoref{general_algorithm} is
    equal to the number used in the first phase by the underlying PBE
    engine.
\end{proposition}
\begin{proof}(Sketch)
The proof follows from the soundness assumption on the individual
components.  Also note that the equivalence specification $\esmall_\equiv$
computed on Line~\ref{line:build_equiv_spec} is equivalent to the intended
program's specification.  Furthermore, the only input-output examples
required are those in intent disambiguation phase (on Line~\ref{line:p1})
since the latter phases already know all the input-outputs there are to
know.
\end{proof}

\begin{proposition}[Restating Proposition~\ref{prop:completeness}]
Assume that all function symbols in the signature $F$ of DSL $\dsllang$
have complete witness functions~\cite{polozov-15}.  If the language
generated by the grammar $\dsllang$ is finite, and if the ranking function
$\rankingscheme$ is monotonic, then, for any set of input-output examples
$\esmall$,
\textsc{synthesize}$(\esmall, \dsllang, \rankingscheme)$ returns the
highest ranked (with respect to $\rankingscheme$) correct program, and
consequently, Program~$p_2$ computed on Line~\ref{line:p2}
of \algoref{general_algorithm} is the highest ranked --- with respect to
$\rankc$ --- correct program in
$\dsllang$.
\end{proposition}
\begin{proof}(Sketch)
    Suppose a PBE engine is trying to synthesize a program generated by symbol $q$
    that meets specification $\specification{}$.
    If $q \rightarrow f(q_1,\ldots,q_n)$ is a production rule of the DSL $\dsllang$,
    then a PBE engine can guess that the top function symbol in the desired program is
    $f$ and then use the witness function for $f$ to generate new synthesis problems
    (new specifications) on symbols $q_1, \ldots, q_n$.
    This divide-and-conquer approach for synthesis will terminate since the DSL $\dsllang$
    is assumed to contain no loops (it generates a finite language).
    Furthermore, since the signature $F$ and states $Q$ are finite, there are only 
    finitely many choices for the top symbol $f$.  Since the witness functions are assumed
    to be complete, no solution is ever lost when diving a problem to subproblems.
    Finally, a PBE engine can prune solutions based on their ranks while
    searching (see Polozov, et. al.~\cite{polozov-15} for details).
    However, since the ranker is monotonic, pruning will never eliminate the best ranked
    candidate: if the best program generated by $q$ has $f$ at the top, then its first
    child will be the best program generated by $q_1$ due to monotonicity, and hence,
    other programs can be safely pruned at $q_1$.
\end{proof}

\begin{proposition}[Restating Proposition~\ref{prop:convergent}]
    If every $p\in \lang$ has a preimage $p^-\in\prog(\dsllang^e)$ s.t. $\translate(p^-) = p$,
    and if the range of $\rankingscheme_2$ contains only fixed precision rationals, then
    the procedure \textsc{enumerative\_search} is terminating. Furthermore, 
    if the subset of rewrite rules $\RRules$ that are rank increasing (cost decreasing) is confluent, 
    then the procedure can be performed efficiently in time $O(k |\RRules|)$, where $k$ is the number
    of successful rewrites applied by the procedure and $|\RRules|$ is the size of rewrite system.
\end{proposition}
\begin{proof}(Sketch)
    The rule applications are terminating since they increase the rank by at least a fixed constant
    (given by the precision), and ranks are upper-bounded by the rank of the preimage of the 
    lowest cost program.  Since rank-increasing rewrite rules are terminating, if they are also
    confluent, then the rules are convergent and define unique normal forms that are always reachable.
    Hence, the order of application of the rules does not matter.
\end{proof}

\endignore}

\end{document}